\tikzset{
	dots size/.store in=\dotssize,
	dots size=1pt,
	dots spread/.store in=\dotsspread,
	dots spread=10pt
}
\tikzstyle{mybox} = [draw=black, fill=white, very thick, rectangle, rounded corners, inner sep=15pt, inner ysep=25pt] %box author
\tikzstyle{fancytitle} =[draw=black, fill=white, rounded corners, text=black, inner xsep=4pt, inner ysep=6pt, font=\sc] %box author
\newtheorem{theorem}{Theorem}
\newtheorem{lemma}{Lemma}
\newtheorem{proposition}{Proposition}
\newtheorem{assumption}{Assumption}
\title{A core-selecting auction for portfolio's packages}
\author{\textsc{Lamprini Zarpala} \thanks{ I am genuinely grateful to the 33rd Stony Brook International Conference on
Game Theory audience and the 2022 Conference on Mechanism and Institution Design for their valuable comments. Additionally, I am genuinely grateful to
the Royal Economic Society for awarding me a conference grant as an early-stage researcher. Send correspondence to l.zarpala@uu.nl.} \\Utrecht University (U.S.E.) \and \textsc{Dimitris Voliotis}\footnote{dvoliotis@unipi.gr}\\University of Piraeus }
\date{ }
\providecommand{\keywords}[1]{\textbf{Keywords - }#1}
\providecommand{\JEL} [1] {\textbf{JEL Classification -} #1}
\begin{document}
	\maketitle
	\begin{abstract}
We introduce the \enquote{local-global} approach for a divisible portfolio and perform an equilibrium analysis for two variants of core-selecting auctions.  Our main novelty is extending the Nearest-VCG pricing rule in a dynamic two-round setup, mitigating bidders' free-riding incentives and reducing the sellers' costs. The two-round setup admits an information-revelation mechanism that may offset the \enquote{winner's curse}, and it is in accord with the existing iterative procedure of combinatorial auctions. With portfolio trading becoming an increasingly important part of investment strategies, our mechanism contributes to increasing interest in portfolio auction protocols.

\end{abstract}
\begin{flushleft}	
	\keywords Package auction $\cdot$ VCG payments $\cdot$ Portfolio Trading
	\\
	\JEL D44 $\cdot$ D47 $\cdot$ G11
\end{flushleft}

\section{Introduction}
Portfolio auctions are a type of auction where multiple assets are bundled and offered for sale as a single unit. These auctions are often employed to sell large and complex collections of assets, offering an efficient and low-risk approach compared to traditional auctions. Across finance, real estate, energy contracts, and art, portfolio auctions have become increasingly prevalent.

In the financial sector, portfolio auctions typically involve a basket of financial assets and have gained prominence due to the growth of ETFs and algorithmic trading. These auctions are designed as an automated execution protocol,\footnote{ICE Bonds, Tradeweb, and OneChronos are all examples of portfolio auctions for securities. These platforms allow asset managers and brokers to sell large and complex collections of securities transparently and efficiently.} that facilitates the seamless auction and trading of a portfolio of securities, such as stocks and bonds. The execution can be done either by trading the entire portfolio at once or by \enquote{slicing} it into optimal packages using algorithmic trading strategies. The latter may achieve a better execution price than the single execution, yet it is possible to increase the execution time \citep{Gsell2008Assessing,KONISHI2002197}.

%\textcolor{blue}{The execution of the basket can be complex due to the number of securities it includes.  It can be done by trading the entire basket at once,\enquote {block} method, or by the \enquote{slice} method. The \enquote{slice} method divides the basket into optimal packages through algorithmic trading models, allowing a sequential trade and the time of submission to the market. This approach may achieve a better price than the single execution, yet with a possible increase in the execution time \citep{Gsell2008Assessing,KONISHI2002197}.}

Portfolio auctions are also known as basket or program trading portfolio investment strategies. These strategies have been in the investment landscape for a while and have accounted for 50\% to 60\% of the total daily trading volume at the NYSE\footnote{In periods with high volatility, it can reach up to 90\%. Since 2012, the NYSE no longer publishes weekly program trading reports.}, allowing broker-dealers to bid fast for a basket of trades executed as a single transaction. Before submitting bids, broker-dealers possess a clear understanding of each asset's weight in the basket, expressed either as a dollar value or percentage \citep{Giannikos2007a}.

The brokerage plays an important role in basket trading strategy using two alternative strategies:  principal and agency trading\footnote{In principal trading, the brokerage completes a customer's trade with their inventory and can profit from the bid-ask spread, while in agency trading, the brokerage locates a counterparty to a customer's trade willing to purchase or sell the security for the same price as the counterparty.}. We focus on \emph{principal} trade, in which a broker undertakes the risk price on behalf of the asset manager and executes the portfolio at an agreed price plus a commission fee. A key aspect of principal trading is the need for the broker to minimize any deficit caused by the price differential, i.e., the actual execution price versus the agreed price with the asset manager. The commission fee for the broker serves as compensation for undertaking the risk of executing the portfolio according to the agreed-upon terms and as a source of revenue for the services provided. 
  %Εδώ μήπως πρέπει να εξηγήσουμε ότι το commission fee είναι η στρατηγική των brokers;

%y bringing this protocol together with our ICE Data Services’ continuous evaluated prices, portfolio managers can execute as close as possible to NAV price
The impact of this investment strategy for asset managers is vast, especially in periods of high volatility where cost savings and speedy risk transfer are imperative. Even though asset managers always have the option to execute those trades at their discretion, using a portfolio auction trading tool enables them to have access to multiple liquidity providers simultaneously. In addition, this strategy reduces any information leakage and guarantees execution efficiency - an efficient way to deal with large and complex transactions. % Εδώ είναι ασαφλες το περιεχόμενο της τελευταίας πρότασης.
From the broker's perspective, this strategy is an opportunity to access new order flows that might match their inventory or facilitate new business. The challenge is to bid low enough to win the auction and at the same time to cover the assumed price risk from the execution \citep{Padilla2012,Giannikos2012a}.

Portfolio auctions typically employ single-round first-price sealed-bid format where the entire portfolio is allocated to a single broker. One of the caveats of this design is that it can cause weak demand since some brokers might have no preference for possessing the whole portfolio but have a higher valuation for a part of it. Thus, the asset manager might suffer an increased cost for the portfolio's execution.  Another issue is that it does not provide bidders with valuable information to avoid aggressive bidding, thereby the \enquote{winner’s curse}. Indeed, brokers may overestimate the portfolio's value, potentially paying more than it is worth and causing a financial loss. Brokers may possess limited information or need more research capabilities to estimate the portfolio's value accurately. Hence, the winner's curse phenomenon is often manifested in portfolio auctions. 

Motivated by this context and other real-world auctions, we examine here a package auction for portfolios. We assume that the auctioneer divides the portfolio into packages for a discrete number of brokers with different valuations. To avoid auction's inefficiency and the \enquote{exposure problem},\footnote{In environments with complementarities, simultaneous bidding for multiple packages exposes bidders with strong complementarities to the risk of winning by bidding more for a particular package than it is worth on its own \citep{Rothkopf1998,Roth2003}} the packages are preset such that they minimize the portfolio's execution cost and market impact losses. We characterize those who compete for the packages as \enquote{local} brokers and those who bid for the whole portfolio as \enquote{global} brokers. Each broker knows the individual securities in the portfolio and takes long positions in all securities. Our two-round design with information updates in the interim creates incentives for brokers to reduce their bidding further and increase the asset manager's revenue.

 We propose a new pricing rule, the D-NVCG, that aligns the bidder's incentives with truthful bidding in both rounds and mitigates any free-riding opportunities that brokers might have if the auction occurred as a single round \citep{Baranov2010, Roth2003}. A local broker in the first round might get the desired package by bidding aggressively, with the expectation of receiving a higher fee in the second round if the coalition underbids the global broker. The rule considers the first-round bid as a reference point to adjust the final fees paid to the local brokers. We prove that this pricing rule is optimal for the coalition of brokers. 

The auction occurs in two rounds. In the first round, the asset manager simultaneously performs sealed-bid auctions (a) for \enquote{local} brokers who compete for each package separately and (b) for \enquote{global} brokers who compete for the aggregate portfolio. The first round qualifies those brokers with the lowest fee (first price) - one global broker and one local broker for each package - to participate in the second round.  The practice of qualifying one single global broker may puzzle some observers. It might seem to limit competition in the second round and harm the seller. However, the qualification stage promotes only those brokers with the lowest bid in the next round, i.e., the most competitive, and brokers' qualification bids set an upper bound for the next round.

In the interim, the asset manager reveals information for the qualified bids of the first round. This attribute in the design provides a solution to brokers' information asymmetries by revealing valuations and avoiding jump-bidding \citep{Perry2000}. This private information update is valuable for the price discovery of brokers, as they can understand the likely price for each package and the whole portfolio. Thus, their valuation efforts are more productive, and the auction becomes more efficient and less susceptible to manipulations \citep{Ausubel2017, Milgrom2021}

In the second round, the local brokers jointly compete in a sealed-bid auction for the whole portfolio against the global, who values packages as perfect complements.\footnote{The utility of the whole portfolio has a higher utility than the sum of the utilities for the individual packages \citep{Crampton2006}.} If the coalition of local brokers submits a bid lower than that of global, the coalition wins, and the fees are awarded to each local broker based on a pricing rule that ensures a \textit{core} outcome. 

In our setup, the \textit{core} attributes the set of payoff vectors corresponding to the portfolio's allocation, where no better outcome exists for the asset manager and the local brokers. Any payoff vector in the core is \textit{ bidder-optimal} if there is no other payoff vector that \textit{Pareto} improves upon it, i.e., packages are allocated to bidders with the highest allocation.
In other words, the coalition of local brokers is unblocked and feasible \citep{Day2008b}. 
%Any payoff vector for the coalition of local brokers is obtained in the core is\textit{ bidder-optimal} only if there is no other payoff vector which is \textit{Pareto-optimal}. 
%The set of such points is called the \emph{bidder-optimal-frontier} of the core \citep{milgrom2004}. 
Hence, our core-selecting auction provides a framework where the payoff of brokers is on the \emph{bidder-optimal-frontier} \citep{milgrom2004}, and may reduce the costs of the asset manager significantly due to the two-round setup  \citep{Perry2000}. In the existing literature, both the D-NVCG rule and the N-VCG rule lead to results in the bidder-optimal frontier. However, the D-NVCG includes a boundary related to the first-round bidding behavior that may further lower the costs for asset managers.

From a practical perspective, the D-NVCG can be applied in general package auctions such as spectrum auctions, energy contracts, or even real estate auctions.  Its implementation as a pricing rule in discrete iterative auctions effectively incentivizes truthful bidding during the final two rounds. In iterative auctions, bidders reveal partial and indirect information about their valuation. The D-NVCG addresses one of the biggest challenges of iterative auctions: strategic bidding and economic efficiency \citep{parkes2006iterative}. The activity rule \citep{milgrom2004} that D-NVCG comprises in the pricing induces brokers to bid low from the first stage of the auction. 

\subsection{Outline}
Our paper proceeds as follows. Section \ref{literature} reviews the related literature. Section \ref{Model} presents the model and describes the two-round mechanism. Section \ref{Analysis} derives the intuitive form of the optimality conditions and analyses the equilibrium. Finally, section \ref{concludes} discusses the results and concludes.

%We perform an equilibrium analysis for two core-selecting auctions: the well-known Nearest-VCG rule \citep{Day2012} and its modified version that we introduce here. 

\section{Related Literature} \label{literature}

Most literature in package auctions (or combinatorial auctions) focuses on developing fast heuristics to solve the complex winner's determination problem \citep{Rothkopf1998, DeVries2003, Benedikt}, while economists focus on specific properties by using simplified theoretical models \citep{Day2008b,Milgrom2007, Ausubel2019}. 

The underpinnings of package auctions can be traced to the seminal paper of \cite{Vickrey1961b}, in which each bidder is asked to pay an amount equal to the externalities he exerts on the competing bidders. Vickrey shows that this payment rule motivates bidders to submit a \enquote{bid} according to the actual demand schedules, regardless of the bids made by others.  %\cite{Clark1971} and \cite{Glove1973} generalized the Vickrey mechanism to other applications.
It is easy to think that a Vickrey auction could generate an efficient outcome for package auctions due to its appealing property of incentive compatibility. Nevertheless, in practice, the Vickrey auction is rarely used because this mechanism can lead to low payoffs for the auctioneer, even if bids are high enough \citep{Milgrom2007,Ausubel2019}. Also, the Vickrey pricing is determined by a non-monotonic function of the broker's values in the sense that an increase in the number of brokers can reduce equilibrium revenues for the asset manager up to zero. Thus, brokers can use profitably \enquote{shill bidding} to increase competition in order to finally charge higher fees \citep{Ausubel2002,Milgrom2006Ausubel}.

The existing literature alleviates the aforementioned shortcomings by proposing alternative procedures.
\cite{Ausubel2002} developed a mechanism called the ascending proxy auction, while, \cite{Xia} reviewed several pricing schemes for incentive compatibility and ascertained that the revelation of the losing bids may reduce the value of the prices relative to winning bids.
\cite{Day2008b}  and \cite{Day2012} suggested a new cluster of payment rules for core-selecting auctions with respect to the  reported values. 

Recently, \cite{Ausubel2019} provided a theoretical justification for using core-selecting auctions. They propose an incomplete-information setting in which bidders' values are correlated and analyze the equilibrium under a \enquote{local-global} approach. They found that in environments with positive correlations, core-selecting auctions can be significantly closer to the true core than the VCG outcome\footnote{\cite{Goeree2016} show that a mechanism that guarantees a core outcome for true values does not always exist.}. To our knowledge, \cite{Krishna1996a} were the first to explore an independent private value setting\footnote{\cite{Rosenthal1996} extended the setting with common values.} for the simultaneous sale of multiple items in the \enquote{local-global} setting. 

The problem that arises in the \enquote{local-global} setting is when a coalition wins and the VCG outcome is not in the core.\footnote{ If the VCG outcome is in the core, it is the unique bidder-optimal allocation.}

Then, the closer the bidder-optimal frontier gets to VCG pricing, the fewer incentives for misreporting \citep{Day2008b, Ausubel2002}. \cite{Day2007a} and \cite{Day2012} find alternative payment rules that minimize the bidder's incentives for this strategic manipulation. This rule is called the \emph{Nearest-VCG (Quantratic Rule)}, a point in the bidder-optimal-frontier where the maximum deviation from VCG pricing is minimized. A critical assumption of the above research is that packages are equally weighted. Yet, equal sharing of the core outcome might be unfair for the small allocations, so another option is to share the core outcome with the weighted Nearest-VCG rule \citep{Ausubel2017}. The weighted Nearest-VCG rule is more relevant for portfolio auctions where each package has a different weight over the portfolio, and any pricing without encountering the weight would be unfair.

\cite{Erdil2017a} have proposed a new class of pricing rules for core-selecting package auctions focusing on the marginal incentives to deviate from \enquote{truthful bidding}. The idea is to select a point in the bidder-optimal frontier close to a reference point. Motivated by their suggestion, we construct a new payment rule, the Dynamic-Nearest-VCG, using an endogenous reference point suggested by the brokers' strategic bidding behavior in the first round. The rule finds a point in the bidder-optimal frontier by calculating the distance from the first-round bid. By including the first-round bid in the final payments, the mechanism becomes robust since it provides incentives to bidders for \enquote{truthfull} reporting in both rounds.

\cite{Benedikt} developed a computational search approach to design bidder-optimal-frontier rules, which are at the facet of the core polytope under a local-global setup. They find that some rules based on Sharpley value outperform the Nearest-VCG, yet their computation is complex. We investigate how to restore truthful incentives from the first round while they show that the best-performing rules are those that provide incentives to bidders with large values.

From a different perspective, it has been developed a vast literature on iterative combinatorial auctions\footnote{A multiple-round bidding process at which the auctioneer releases information regarding the provisional winners and the actual prices at the end of each round. Bidders obtain information regarding the bids of their rivals and can modify their bids in the following rounds \citep{Parkes2006}.}.
One of the merits of this approach is its ease of deployment. It allows bidders to learn about rivals' valuations, and it is the most popular combinatorial auction format used in practice.  For example, the FCC has used only multi-round formats for its auction design \citep{Porter2003}.  Moreover, auction designs that allow the creation of synergies through bidder-determined combinations\footnote{In this context, bidders have the authority to decide what is biddable and which combinations hold economic significance to them.} can yield economically impactful results, sidestepping potential computational complexities \citep{Park2005}.

The Combinatorial Clock Auction (CCA)  in the spectrum auctions is relevant to our model and bears one similarity\citep{ausubel_cramton_milgrom_2017,cramton2009auctioning,bichler2013core}. The similarity lies in the two-stage bidding process with information feedback in the interim. Yet, the whole design is different in both stages. In the qualifying round of our model, the auctioneer specifies the packages, while in the CCA design, bidders specify the desired packages at reserve prices. In the final round of our model the coalition of local brokers bids against the broker who is interested in the whole portfolio, while in the CCA bidders place multiple supplementary bids, which improve their clock-round bids and express values for other packages. \citep{Ausubel2017}.

\cite{levin20} identify a weakness in the equilibrium outcome of CCA since bidders in the first round (clock phase) can exaggerate their bids and gain a surplus in the supplementary round through types' revelation which might lead to a strategically increase in the rival's costs \citep{janssen2019clock}. Thus, in the clock phase, bidders' incentives might be distorted from the trade-off between efficiency and information revelation. 

CCA even though it mitigates the communication complexity of the commonly used bidding\footnote{In XOR bidding language bidders specify values for all the possible combinations of items \citep{Bichler2023}} language, it constrains the expressiveness of the bids and may cause bidders to win only a subset of the desired items at prices exceeding their valuation, known as \enquote{exposure problem} \citep{Bichler2023,Roth2003}. The auction format that we adopt with the preset packages from the beginning allows bidders to state their preferences concretely and simplifies any assumption for computationally complexity\footnote{The number of possible allocations grows exponentially, and with many objects for sale, there is no guarantee the optimal allocation is found in a feasible time} that would arise if bidders specified the packages \citep{Roth2003,Rothkopf1998, GOEREE2010146,Scheffel2012}.

An interesting extension of our model would be to allow brokers to bid simultaneously for multiple packages in the first round and then participate in the second round by consolidating the packages' weights from the qualifying round.  While this approach presents computational challenges and necessitates efficient management of the portfolio's execution time, it could yield a dual outcome. The asset manager might receive competitive bids from the first round, yet brokers might hedge on the packages that do not value, which likely would have a market impact on spot prices for the asset manager and the brokers \citep{Padilla2012}.

\section{Model} \label{Model}

An asset manager sells $m > 2$ securities packaged in a divisible
portfolio $\Theta\in \mathbb{R}_+^m$. 
A set of risk-neutral
brokers, $\mathbb{N}=\{1,\dots,n\}$ are competing\footnote{We assume no entry costs for participating brokers.} for a portfolio $\Theta$, with $n\geq2$.
Two types of brokers participate in the auction: a set of \textit{local} brokers denoted by the generic element $\ell\in \mathbb{L}$, and a set of \textit{global} brokers $\mathit{g}\in \mathbb{G}$, where $\mathbb{L},\mathbb{G}\subset \mathbb{N}$  are disjoint and $\mathbb{L} \cup \mathbb{G} = \mathbb{N}$.

We assume that $\Theta$ is divided in a finite set of $q$ packages $\theta_j$, with $j=\{1,\dots,q$\}, such as $\theta_j\in \mathbb{R}_+^m$ and $\Theta=\sum\limits_{j=1}^{q}\theta_j$. 
The vector $p^*\in \mathbb{R}^m_+$ includes the agreed
exercise price for $m$ securities, and the vector $p\in \mathbb{R}_+^m $ the anticipated price of $m$ securities when delivered.
We denote by $\omega_j=\dfrac{p^*\cdot \theta_j}{p^*\cdot \Theta}$, the weight of $\theta_j$'s value over $\Theta$ with $\omega_j\in(0,1]$ and $\omega_j\leq \frac{1}{n-1}$. Even for the extreme cases, the definition of $\omega$ holds, and our results remain unaltered.

Each broker $i\in \mathbb{N}$ observes a
private signal $s_i\in \mathbb{S}$ about the value of the $\Theta$ or $\theta_j$ and a public signal $\mathit{z}\in \mathbb{Z}$ for the aggregate characteristics of the portfolio. 
We denote by $s_{\ell}$ the private signal of local broker $\ell$, where $s_{-\ell}$ aggregates the private signal of all other local brokers and  $s_g$ the signal of global broker accordingly.
Information $(\mathbf{s},\mathit{z})$, where $\mathbf{s}=(s_{\ell},s_{-\ell},s_g)$, $\forall \ell \in \mathbb{L} $ and $g\in \mathbb{G}$,
is distributed according to a continuous i.i.d. function $F_{\ell}(\cdot)$, with the density function $f_{\ell}(\cdot)>0$  for \textit{local} brokers, and $F_\mathit{g}(\cdot)$, with $f_\mathit{g}(\cdot)>0$ for \textit{global} brokers respectively.  %Finally, following \cite{Milgrom1982} we assume that the random variables $(\mathbf{s},\mathit{z})$ are affiliated. 

All brokers form expectations for the percentage change of
the securities' prices given by the vector,
\[\mathbb{E}\Big[\frac{\Delta p}{p^*}|s_i,z\Big]=\bigg(\dfrac{p_k^*-\mathbb{E}[p_k|s_i,z]}{p_k^*}\bigg)_{k\leq m},\]
where the vector $p^*\in \mathbb{R}^m_+$ includes the agreed
exercise price for $m$ securities, and the vector $p\in \mathbb{R}_+^m $ the anticipated price of $m$ securities when delivered. Both random vectors, $p^*\in \mathbb{R}^m_+$ and $p\in \mathbb{R}_+^m $, are conditional on the signal received and the available public information. %We denote with $\mathit{u}$ the expectation of the global bidder and with $\mathit{v}$ the expectation of the locals. 
Evidently, when $\mathbb{E}[\frac{\Delta p}{p^*}|s_i,z]>0$ brokers anticipate to incur a loss. 

%\textcolor{blue}{Next, we provide an example of the set-up}

\textbf{Example:} Lets assume three securities, $m_1$,$m_2$,$m_3$, with $m_1=6$, $m_2=7$ and $m_3=14$ items respectively in the aggregate portfolio, such as $\Theta=(6,7,14)$.\\ The agreed execution price of each security is: 

\small\[  p^*=\begin{pmatrix}
		
		p^*_1 \\ p^*_2\\ p^*_3 
	\end{pmatrix}=\begin{pmatrix}
		
		2.54 \\ 4.89 \\ 3.10
	\end{pmatrix}\].  \\ 
 Also, we assume that the portfolio is divided into three packages\footnote{Suppose that package $\theta$ consists of only one security. In this case, $\theta$ can be represented as a sparse vector with only one non-zero element. For example, if the security is denoted as $m_1$, the sparse vector representation would be $\theta = (m_1, 0, 0, \ldots, 0)$.} :
	
	\small\[ \Theta = \left\{\begin{array}{lr}
		\theta_1=(2,1,3  )\\
		\theta_2=(1,4,5)\\  \theta_{3}=(3,2,6)
	\end{array}\right\} = \sum_{i=1}^{3}\theta_i \] 

Then the weight of each package is with $\omega_1=\dfrac{p^*\cdot \theta_1}{p^*\cdot \Theta}\approx 0.2075$, $\omega_2=\dfrac{p^*\cdot \theta_2}{p^*\cdot \Theta}\approx 0.4049$ and $\omega_3=\dfrac{p^*\cdot \theta_3}{p^*\cdot \Theta}\approx 0.3876$.

Now lets assume that the anticipated prices of $m_1$,$m_2$,$m_3$ securities  when delivered are:

       \small \[p=\begin{pmatrix}
		
		\mathbb{E}[p_1] \\\mathbb{E}[p_2]\\ \mathbb{E}[p_3] 
	\end{pmatrix}=\begin{pmatrix}
		
		2.80 \\ 4.35 \\ 3.50 
	\end{pmatrix}\]
 Brokers form expectations for the percentage change of the securities' prices:
	    	\small \[\mathbb{E}\Big[\frac{\Delta p}{p^*}|s_i,z\Big]=
		\bigg(\dfrac{p_k^*-\mathbb{E}[p_k|s_i,z]}{p_k^*}\bigg)_{k\leq m}=\begin{pmatrix}
		
	- 10.2 \% \\ +11.04 \% \\ - 12.90\%
	\end{pmatrix}\].
	
We are assuming only long positions.
 Consequently, for security $m_2$, which is anticipated to trade at a price lower than the agreed-upon spot price, brokers are expecting to incur losses.

\subsection{Mechanism} \label{mechanism}
The auction takes place in two rounds $t=1,2$.  Each broker $i\in\mathbb{N} $ submits consecutively a single\footnote{For simplicity we restrict our analysis to this class of auctions.} bid (fee) in basis points $\phi_i^t\in[0,1]$, with $\phi_i^1(s_i,z)$ to be the first-round bid and $\phi_i^2(s_i,\mathit{z^\prime})$ be the second-round bid for $\mathit{z^\prime}$ an update in public information. The fee is calculated ad valorem on the portfolio's value.
The standard tie-breaking rule (in which the winner is selected at
random) applies to both rounds.

\textit{First round.} The asset manager initiates simultaneously $\mathit{q}+1$ sealed auctions for the $\mathit{q}$ packages and the aggregate portfolio $\Theta$. Each local $\ell\in \mathbb{L}$ competes for the package $\theta_j$ that he is interested in and receives no extra utility from owning more than one package. Accordingly, each global $g\in \mathbb{G}$ competes for the aggregate $\Theta$ and receives no utility from owning a single package. The qualified brokers for the next round are $\mathit{q}$ \textit{local} winners and one \textit{global} winner with the lowest bids (first price). The first-round equilibrium always exists under the standard assumptions.

\textit{Second round.} At the outset, the asset manager updates the available public information to $\mathit{z^\prime}$, by revealing the winning bids of the previous round. Then, the qualified \enquote{local} winners of the first round, defined as $\mathbb{Q} \subset \mathbb{L}$  with cardinality $|\mathbb{Q}|=\mathit{q}$ i.e., the number of local packages, jointly compete against the qualified \enquote{global} winner $\mathit{g}$.

In all cases, the second-round bids  are bounded from above by the first-round bidding ($\phi_i^2 \leq \phi_i^1$, $\forall i\in \mathbb{N}$). In fact, this attribute rules out manipulability in the first round. The second round follows the rules of core-selecting auctions, with two possible outcomes\footnote{Without loss of generality ties are resolved.}: the global broker $\mathit{g}$ wins all packages as $\Theta$ when $\phi_g^2 <\sum\limits_{i\in \mathbb{Q}}\omega_i\phi_i^2$, and each local broker $i$ wins one package $\theta_j$ if $\phi_g^2 >\sum\limits_{i\in \mathbb{Q}}\omega_i\phi_i^2$.

The payoff of a local broker $i$, who wins a package $\theta_j$ with $m$ securities for the charged commission $c_i\in \mathbb{R}_+$ is given by: 

\begin{equation}\label{equation1}
\pi_i(\phi^2|s_i,z)=  	\mathbb{E} \big[ (\mathbf{\theta}_j \cdot p^*)\cdot c_i - \mathbf{\theta}_j \cdot \mathbb{E} [\Delta p] \big]\cdot \bm{1}_{\{\sum\limits_{i\in \mathbb{Q}}\omega_i\phi_i^2 < \phi_g^2\}}
\end{equation} 
\\
where the last term is an indicator function for $\sum\limits_{i\in \mathbb{Q}}\omega_{i}\phi_i^2 < \phi_g^2$, when the whole portfolio $\Theta$ is assigned to local brokers for execution. 

For each package $\mathit{\theta_j}$ the expected payoff of each local  broker $i$  results from the charged commission upon the trading value $( \mathit{\theta_j}  \cdot  p^* \cdot c_i$)  minus the potential losses from the price variation ($\mathit{\theta_j} \cdot \mathbb{E}[\Delta p(s_i, z)]$), if exists. We denote the private valuation of each local broker $i$ for $\mathit{\theta_j}$  with $\alpha_i=\mathbb{E}\Big[\dfrac{\mathit{\theta_j} \cdot \Delta p}{\mathit{\theta_j}  \cdot  p^*}|s_i,z\Big]$, with $\alpha_i \in \mathbb{R}$.

On the other hand, if the global broker wins the aggregate portfolio $\Theta$, for $\sum\limits_{i\in \mathbb{Q}}\omega_i\phi_i^2>\phi_g^2$, he charges $C=\sum\limits_{i\in \mathbb{Q}}\omega_{i}\phi_i^2$  and follows a similar payoff function. We denote the private valuation for $\Theta$ of each global broker $g$ with $\upsilon=\mathbb{E}\Big[\dfrac{\mathit{\Theta} \cdot \Delta p}{\mathit{\Theta}  \cdot  p^*}|s_g,z\Big]$, with $\upsilon \in \mathbb{R}$.  The bidder-optimal-frontier, for any local broker $i$, is satisfied when $\sum\limits_{i\in \mathbb{Q}}\omega_ic_i=\phi_g^2$.

The VCG pricing function $\mathit{c}(\phi_\ell^2,\phi_g^2)$, where  $\mathbf{\phi}_\ell^2=(\phi_1^2, \dots, \phi_q^2)$, is given by:

\begin{equation}\label{eq:1}
	\mathit{c}(\phi_\ell^2,
	\phi_g^2) =
	\begin{cases}
		(c_1^V, \dots,
		c_q^{V}, 0)
		& \mbox{if $\phi_g^2 >\sum\limits_{i\in \mathbb{Q}}\omega_i\phi_i^2$ ,}\\ (0, \dots, 0, C) &\mbox{if $\phi_g^2<\sum\limits_{i\in \mathbb{Q}}\omega_i\phi_i^2$ .}
	\end{cases}
\end{equation}
\\

where %\textcolor{blue}{$\omega_i c_i^V=\max\bigg\{0,\phi_g^2 -\sum\limits_{j \neq i}\omega_{j}\phi_{j}^2\bigg\}$}. 
$c_i^V=\max\bigg\{0,\dfrac{\phi_g^2 -\sum\limits_{j \neq i}\omega_{j}\phi_{j}^2}{\omega_i}\bigg\}$

Respectively, the core-selecting pricing rule is given by:

\begin{equation}\label{eq:2}
	\mathit{c}(\phi_\ell^2,
	\phi_g^2) = \begin{cases}
		(c_1, \dots,
		c_q, 0)
		& \mbox{if $\phi_g^2 >\sum\limits_{i\in \mathbb{Q}}\omega_i\phi_i^2$  ,}\\
		(0,\dots,0, C) &\mbox{if $\phi_g^2<\sum\limits_{i\in \mathbb{Q}}\omega_i\phi_i^2$ .}
	\end{cases} 
\end{equation}
\\
such that $c_i\in[\phi_i^2,c_i^V]$   with $\sum\limits_{i\in \mathbb{Q}} c_i\leq\phi_g^2$ and $C\in\big[\phi_g^2, \sum\limits_{i\in \mathbb{Q}}\omega_i\phi_i^2 \big] $.

If the VCG outcome is in the core, no broker has the incentive to deviate from his truthful preferences and it is the only selected Pareto-dominant outcome \citep{Ausubel2002}. 
However, when the VCG is outside the core, a different pricing rule is necessary if we are to minimize the incentives for deviation \citep{Day2007a}.  

In the following, we present the two core-selecting pricing rules: the Nearest-VCG rule \citep{Day2012} and the Dynamic-Nearest-VCG, a slight modification of the former that we introduce to accommodate our two-round setup.  In both cases, the global broker receives a fee equal to $\sum\limits_{i\in \mathbb{Q}}\omega_{i}\phi_i^2$ upon winning. Whereas if local brokers win, they apportion $\phi_g^2$ as follows:

\begin{enumerate}[label*=\arabic*.]
	\item \textbf{Nearest-VCG rule} %\todo{rule mostly used in practice}
	
	This pricing approach was first introduced by \cite{Day2007a} and \cite{Day2012}. The fundamental notion is to select a point in the bidder-optimal-frontier that will minimize the Euclidean distance from the VCG outcome \citep{Ausubel2019}. For a finite set of locals, the payments for the weighted packages are divided into:
	
	\begin{equation} \label{eq:3}
		c_i(\phi_\ell^2 , 
		\phi_g^2)=(c_1^{V} - \Delta_1, \dots,c_{q}^V-\Delta_q,0) ,
	\end{equation}
	
	where $\Delta_i=\sum\limits_{i\in \mathbb{Q}}\omega_ic_i^V - \phi_g^2 $ is the minimum downward correction on the VCG outcome that corresponds to each local broker $i$.

	\item \textbf{Dynamic-NVCG (D-NVCG) rule}
	
	This rule selects a vector of fees in the \textit{bidder-optimal-frontier}  determined by local brokers' first-round bidding. The rationale is that overbidding incentives in the first round are penalized for deviating from the VCG pricing. 
	
	Suppose $\mathbb{Q}=\mathbb{Q}^u \cup \mathbb{Q}^d$ and  $\mathbb{Q}^u \cap \mathbb{Q}^d=\varnothing$, where $\mathbb{Q}^u=\{j\in \mathbb{Q}|\phi_j^1>c_j^V\}$ and $\mathbb{Q}^d=\{i \in \mathbb{Q}|\phi_i^1\leq c_i^V\}$. Then, for any bidder $i$, the final fees of all bidders are readjusted downwardly by $\epsilon_i= \phi_i^1-c_i^V$.

	\begin{equation}\label{eq:5}
		c_i(\phi_\ell^2,\phi_g^2)=\begin{cases}
			
			\left[c_i^V - \Delta_{i} \right]+ \dfrac{
				\sum\limits_{j\in \mathbb{Q}^u}\epsilon_j}{\sum\limits_{i\in \mathbb{Q}^d }\omega_i}         & \mbox{if $\phi_{i}^1\leq c_i^V$} \\
			\left[c_i^V - \Delta_i\right] -\epsilon_i & \mbox{if $\phi_i^1>c_i^V$}
		\end{cases}
	\end{equation}

	where $\Delta_i=\sum\limits_{i\in \mathbb{Q}}\omega_ic_i^V - \phi_g^2$ and $\epsilon_j=\omega_j(\phi_j^1-c_j^V)$.

\end{enumerate}

Each bidder with $\phi_i^1>c_i^V$ will receive a downward adjustment on the Nearest-VCG pricing equal to the deviation $\epsilon_i$, while a bidder with $\phi_i^1\leq c_i^V$ will be rewarded for his strategy in the first round with an increase in the Nearest-VCG fee.

\subsection{Examples}
We provide two examples of the implementation of the payment rules. 

\paragraph{Example 1} Assume an asset manager demands liquidity for a portfolio $\Theta$. He divides the portfolio into two packages: the first package is $\theta_1$ with a weighted-value $\omega_1=0.6$ over the nominal value of $\Theta$, and the second package is $\theta_2$ with a weighted-value $\omega_2=0.4$ over the nominal value of $\Theta$. 
The qualified winners of the first round are the local broker $1$ for the package $\theta_1$, the local broker $2$ for the package $\theta_2$, and the global broker $g$ for portfolio $\Theta$.

In the second round, let's suppose that brokers submit the following bids in basis points\footnote{The payment also is in basis points, recall that $\omega$ is a real number that takes values [0,1]}
\begin{center}
	\begin{tabular}{c c c c c }
		\multicolumn{2}{c}{Local 1} & \multicolumn{2}{c}{Local 2}   & Global\\
		$\phi_{\ell_1}^1$	&$\phi_{\ell_1}^2$ &$\phi_{\ell_2}^1 $ &$\phi_{\ell_2}^2$ & $\phi_g^2$\\
		27	&25 &19 &10&22\\
	\end{tabular}
\end{center}
\bigskip
Since the aggregate bid of the local brokers equals $\omega_1\phi_{\ell_1}^2+\omega_2\phi_{\ell_2}^2=19$, they win. Thus, the asset manager assigns to the local brokers to execute the portfolio $\Theta$ jointly.  

At this point, the question is how much the asset manager will pay each local broker to execute each assigned package. If the asset manager applies the VCG pricing rule, as defined in equation (\ref{eq:1}), the resulting payments to local broker $1$ and broker $2$ are $c_1^V=30$ and $c_2^V=17.5$ respectively. 
However, the asset manager's total payment $\omega_1 c_1^V+ \omega_2 c_2^V = 25$ would be higher than if the portfolio was assigned to the global broker. Thus, the weighted total payment of the two locals must not exceed the biding $\phi_g^2$ of the global bidder (\enquote{second-price} rule).

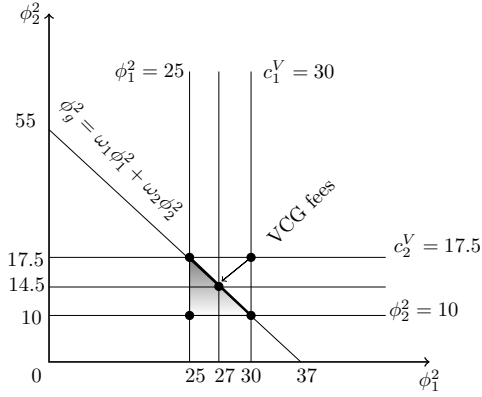
\begin{figure}[ht] 
	\centering
	\resizebox{6.5cm}{!}{
		
		\begin{tikzpicture}[yscale=1, xscale=1]
			%====draw axis====
			\draw [<->,thick] (0,6) node (yaxis) [left] {$\phi_2^2$}
			|- (6.5,0) node (xaxis) [below] {$\phi_1^2$};
			%===points in the axis======
			\node [below left,black] at (0,0) {$0$};
			\node [below,black] at (-0.30,1) {$10$};
   	%===nearest distance VCG=====

			\draw[->] (2.95 ,1.39)--(3.45,1.8)--(2.95 ,1.39);
   	%==core===
			\shadedraw [shading=axis]  (2.4,1.8) --(2.4, 0.8) --(3.45,0.8) ;%shade
 \node [below,black] at (-0.38,1.58) {$14.5$};  %c2=14.5
			\node [below,black] at (-0.40,2) {$17.5$};
			\node [below,black] at (-0.40,4.4) {$55$};
			\node [below,black] at (2.5,0) {$25$};
             \node [below,black] at (3,0) {$27$};  %c2=27
			\node [below,black] at (3.45,0) {$30$};
			\node [below,black] at (4.4,0) {$37$};
			%==== draw vertical lines====
			\draw [black] (4.3,0)--(0,4); %\phi=0.6x+0.4y
			\draw [black] (0,0.8)--(5.75,0.8);%phi=10
   \draw [black] (0,1.29)--(5.75,1.29);%phi=10
			\draw [black] (0,1.8)--(5.75,1.8);%phi=20
			\draw[black]  (2.4,0)--(2.4,5);%phi=25
   \draw[black]  (2.9,0)--(2.9,5);%phi=25
			\draw[black]  (3.45,0)--(3.45,5);	%phi=32
			%====label vertical lines===
			\node[left,black] at (5,5) {$c_1^V=30$};
			\node[left,black] at (7.5,2) {$c_2^V=17.5$};
			\node [below,black] at (6.4,1.2) {$\phi_2^2=10$};
			\node [right,black] at (1,5) {$\phi_1^2=25$};
			%====label \phi_g======
			\node [rotate=315] at (1.2,3.5) {$\phi_g^2=\omega_1\phi_1^2+\omega_2\phi_2^2$};	
			%======MRC=============	
			\draw [very thick, black] (3.45,0.8)--(2.4,1.8); 
			%=======points=========
			\filldraw[black] (2.4,1.8) circle (2pt) node[anchor=west]{};%(25,17.7)
			\filldraw[black] (3.45,0.8) circle (2pt) node[anchor=west]{};%(30,10)
			\filldraw[black] (2.4,0.8) circle (2pt) node[anchor=west]{};%(25,10)
   \filldraw[black] (2.9,1.3) circle (2pt) node[anchor=west]{};%(14.5,27)
			\filldraw[black] (3.45,1.8) circle (2pt);%(32,17.)
			\node[rotate=45]at (4.3,2.6) {VCG fees};%(label)
	
		%	\node[rotate=325]at (2.8,1.15) {\textbf{Core}};%label core
			%===nearest distance VCG=====
		%	\draw[->] (2.955,1.335) -- (3.15,1.55) -- (2.955,1.335);

			%===
	\end{tikzpicture}}
	\centering 
	\caption{Core point closest to VCG payments}
	
	\label{fig:1}

\end{figure}
\FloatBarrier

Figure \ref{fig:1} maps the broker's fees for which the coalition of locals is not blocked.The core itself can be graphed as a fee's space. The objective is to optimize the core outcome, considering the constraints for the values of $c_1$, $c_2$ from equation \eqref{eq:2}, and the bid of the global broker $\phi^2_g$. The values of $c_1$, $c_2$, must fall within a respective range, with $25 \leq c_1\leq 30$, $10 \leq c_2\leq 17,5$, subject to  $\omega_1c_1+\omega_2c_2\leq 22$. This means broker 1's bid must be fixed within the range [25,30]. The same applies to broker 2. 

%, the non-weighted core corresponds to $c_1\in[25,30]$ for local broker $1$, $c_2\in[10,17.5]$ for local broker $2$, and $C\in[19,22]$ for the global.

One can readily notice that the constraints defining the core upwardly are the tie-breaking bids between the coalition and the global broker. The VCG rule defines the upper constraint for each local broker. The lower bounds on the local brokers' pricing values are their bids, consistent with the assumption of individual rationality \citep{Day2008b}.
Suppose that the local broker $1$ bids $\phi_1^2>30$. This outcome would be blocked by the global bid $\phi_g^2=22$ and broker 2's bid $\phi_2^2=10$.
The same applies if the local broker $2$ bids $\phi_2^2>17.5$. This outcome would be blocked by the global and local broker $1$.

Using the Nearest-VCG pricing rule from equation (\ref{eq:3}), we will minimize the distance from the VCG pricing rule to obtain an outcome that will be included in the core intervals, with downward adjustment equal to  $\Delta= \sum\limits_{i=1}\omega_ic_i^v - \phi_g^2=3$. In the core interval, local brokers' payoff is maximized at \textit{bidder-optimal-frontier} where the tie-break occurs.
Thus, the asset manager will pay the broker $1$ with the NVCG rule (equation \eqref{eq:3}) a commission $c_1=27$ bps and broker $2$ with a commission fee $c_2=14,5$ bps. These two fees ensure an outcome in the core and, at the same time, lie in the bidder-optimal-frontier, with $\omega_1c_1+\omega_2c_2= 22$. The point (27,14.5) which corresponds to the NVCG fees is depicted on the bidder-optimal-frontier in Figure \ref{fig:1}. The slope of the curve is defined by the weights.

One shortcoming of the Nearest-VCG pricing rule is that it has been designed for single-round auctions without encompassing the bidding behavior of the previous round. With the Dynamic-Nearest-VCG, incentives for bidding close to truthful valuations in the first round are rewarded. At the same time, those who misreport are \enquote{punished} by receiving a lower commission fee when the auction ends.  

For instance, the local broker 2 submits $\phi_2^1=19$ in the first round. This bid qualifies broker 1 for the second round, where the broker can reduce further or repeat it. Here, the bid of broker 2 is reduced in the second round by 9 bps while broker 1 reduces the bid by 2 bps. By the information released in the interim, each broker is updated for the price estimates of others.

According to equation (\ref{eq:5}), with the D-NVCG the asset manager will pay the Nearest-VCG prices minus any deviation that the VCG fee has from the  first-round bidding that is for broker $2$  a fee equal to $c_2=13$ bps and for broker $1$ $c_1=28$ bps. The intuition behind constructing the D-NVCG rule is to mitigate the free-riding incentives. Hence, the bidder $2$ will receive a reduced commission fee for the portfolio's execution since in the first round submitted a relatively high bid. This pricing rule might not have a direct effect on the asset manager's revenue since both rules (N-VCG and D-NVCG) lie in the bidder optimal frontier. However, it affects bidders' payoff.

In the next example, we illustrate the pricing rules for $\ell>2$ local brokers; the bids are quoted in bps:

\paragraph{Example 2} 

Assume that the qualified winners for the second round are five local brokers who compete against one global. Table \ref{tab1} presents each local broker $i$'s bid for each package $\theta_j$ with a weight $\omega_{j}$, respectively.  Since $\sum\limits_{i\in \mathbb{Q}}\omega_{i}\phi_{i}^2<\phi_g^2$, with $\sum\limits_{i\in \mathbb{Q}}\omega_{i}\phi_{i}^2=22.5$ and $\phi_g^2=25$, the asset manager assigns the portfolio's execution to local brokers. The VCG outcome $ c_i^V$ is calculated for each local broker $i$ based on equation (\ref{eq:1}) and presented in the relevant column.

\begin{table} 
	
	\centering  
	\resizebox{11cm}{!}{
		\begin{tabular}{c c c c c c c c c c} 
			\hline
	Local & Weights &\multicolumn{2}{c}{Bids} & VCG &  Core & Nearest & Dynamic \\
			
			\rule[-1ex]{0pt}{2.5ex} Brokers & $\omega_{i}$ & $\phi_i^1$ & $\phi_i^2$ & $c_i^V$  & Interval &  VCG  & NVCG\\
			\hline \hline
			\rule[-1ex]{0pt}{2.5ex} $1$ & $0.18$ & 25 & 20 & 33.88 & [20, 33.88]& 23.88  & 24.08 \\
			\hline
			\rule[-1ex]{0pt}{2.5ex} $2$ & $0.22$ & 30 & 21 & 32.36 & [21, 32.36] & 22.36 & 25.55 \\
			\hline
			\rule[-1ex]{0pt}{2.5ex}$3$ & $0.18$ & 36 & 22 &  35.89  & [22, 35.89] & 25.88 & 25.77 \\
			\hline
			\rule[-1ex]{0pt}{2.5ex} $4$ & $0.2$& 36 & 23 &  35.5 & [23, 35.5]& 25.5 & 25 \\
			\hline
			\rule[-1ex]{0pt}{2.5ex} $5$ & $0.22$& 30 & 26 & 37.36 &[26, 37.36]& 27.36  & 27.55 \\
			\hline \hline
			\rule[-1ex]{0pt}{2.5ex} $\mathit{Global}$ & & 28 & 25 &  & [22, 25]  &  &  \\
			\hline
	\end{tabular}} \caption{ 5 local and 1 global brokers}\label{tab1}
	
\end{table} 

\FloatBarrier

Similarly to the previous example, for the local broker $i$, a fee higher than $\phi_i^2>c_i^V$ is blocked by the coalition of locals given that others submit a bid equal to $\phi_{-i}^2$ and the global's bid $\phi_g^2=25$. The bidder-optimal-frontier is satisfied for $\sum\limits_{i=1}^{5}\omega_i\phi_{i}^2=25$ maximizing the local brokers' pay-offs, for every core interval defined by equation (\ref{eq:2}).

With the Nearest-VCG pricing rule from equation (\ref{eq:3}), the asset managers pay a commission fee to the local broker $1$ equal to  23.88, i.e., the local broker $i$ receives 10  bps of the commission fee less compared to the VCG rule. For the local brokers 2,3,4,5, the Nearest-VCG costs are presented in Table \ref{tab1}.

In this example, two local brokers have submitted a higher fee in the first round: local broker $3$ with $\phi_3^1>c_3^V$ and local broker $4$ with $\phi_4^1>c_4^V$, respectively. The suggested pricing rule restricts brokers from manipulating the outcome of the auction.

\begin{figure}
\centering
\begin{tikzpicture}
	\begin{axis}[xlabel={$\phi_i^2$}, ylabel={Pricing Rule ($c_i$)}, legend style={
			at={(0.5,-0.27)},
			anchor=north,
			legend columns=3
		}]
		\addplot[
		scatter,only marks,scatter src=explicit symbolic,
		scatter/classes={
			a={mark=square*,blue},
			b={mark=triangle*,red},
			c={mark=o,draw=black,fill=black}
		}
		]
		table[x=x,y=y,meta=class]{
		x y class label
		20 33.88 a
            21 32.36 a
		22 35.89 a
		23 35.5  a
	    26 37.36 a 
		20 23.88 b
		21 22.36 b
		22 25.88 b
		23 25.5 b
		26 27.36 b
		20 24.08 c
		21 22.55 c
		22 25.77 c
		23 25 c
		26 27.55 c
			%3
		};
		\legend{VCG,N-VCG,D-NVCG}
    \node[anchor=center] at (axis cs:2.9,4.5) {$\ell_1$};
        \node[anchor=center] at (axis cs:4.4,6.3) { $\ell_2$};
        \node[anchor=center] at (axis cs:1.8,3.8) {$\ell_3$};
         \node[anchor=center] at (axis cs:5,6.9) {$\ell_4$};
         \node[anchor=center] at (axis cs:8.8,10.7) {$\ell_3$};

        %\draw[dashed] (axis cs:0,25) -- (axis cs:0,37);
	\end{axis}
\end{tikzpicture}

\caption{Pricing Rules for Package Bidding }  \label{fig:3}
\end{figure}
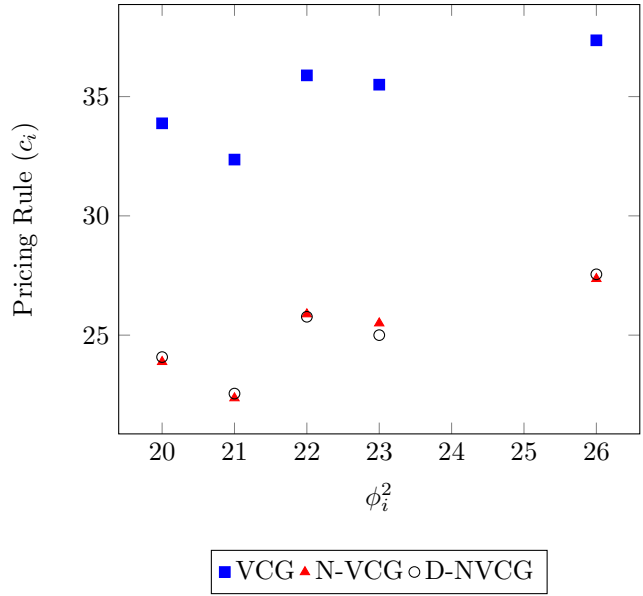

Applying the Dynamic-Nearest-VCG pricing rule from equation (\ref{eq:5}) for the local brokers $3$ and $4$ who have bidden excessively in the first round, the asset manager will pay the local broker $3$ a commission equal to 25.77 ($\epsilon_3=0.11$) and the local broker $4$ a commission equal to 25 ($\epsilon_4=0.5$) . Both local brokers $3$ and $4$ will bear an extra cost for their misreporting incentives in the first round.

Those local brokers who bid prudently in the first round - brokers 1,2 and 5 -  will receive an increased commission fee, with the incremental increase set at 0.19 bps ($\sum\limits_{j\in \mathbb{Q}^u}\epsilon_j=0.12$ and $\sum\limits_{i\in \mathbb{Q}^d }\omega_i=0.62$)) per broker. Thus, the local broker $1$ will receive a fee equal to 24.08 The same rule applies for local brokers $2$ and $5$. 
In Figure \ref{fig:3}, we can observe that the Dynamic-Nearest VCG pricing rule can yield higher fees to those locals who bid close to their truthful valuations in the first round, while simultaneously ensuring that the cost of execution remains low for the asset manager.
\section{Analysis} \label{Analysis}

We start our analysis by characterizing our mechanism in the second round as pivotal.  This means that the fee received by any local broker $i$ is equal to the loss imposed on other locals by adjusting $\phi_\ell^2$ to attribute $i$'s values. 
We define a bid $\phi_i^2$ submitted by broker $i$ as \textit{pivotal}, if and only if $\phi_g^2=\sum\limits_{i\in Q} \omega_{i}\phi_{i}^2$ holds and if for any $\gamma>0$, a bid $\phi_i^2-\gamma$ attributes a non-empty package $\theta_j$, while $\phi_i^2+\gamma$ yields the null package.

 A broker is pivotal if his report changes the auction outcome, in comparison to excluding the broker or attributing the null report to him, then the auction satisfies the pivotal pricing property \citep{milgrom2004,Ausubel2019}. In a local-global setting, this property is satisfied for any core-selecting auction.

\begin{lemma}[\cite{Ausubel2019}]\label{lemma1}
	Every core selecting auction satisfies the pivotal pricing property. 
\end{lemma}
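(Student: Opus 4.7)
The plan is to derive pivotality directly from the binding constraint at the bidder-optimal frontier. In the second-round local-global setup, equation \eqref{eq:2} places any core fee vector inside a polytope whose only non-redundant upper constraint (when local brokers win) is $\sum_{i \in \mathbb{Q}} \omega_i c_i \leq \phi_g^2$. At any bidder-optimal point of this polytope the inequality must bind, and binding is exactly the tie-break equality required by the definition of a pivotal report.

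I would split into cases by the winner. Suppose first that the local coalition wins. If $\sum_{i \in \mathbb{Q}} \omega_i c_i < \phi_g^2$ strictly, then any local broker $i$ could raise $c_i$ by some $\gamma \in \bigl(0, (\phi_g^2 - \sum_{j} \omega_j c_j)/\omega_i\bigr)$ while staying in the core, strictly increasing his payoff and contradicting bidder-optimality. Hence $\sum_{i \in \mathbb{Q}} \omega_i c_i = \phi_g^2$. For the marginal check, an upward perturbation of broker $i$'s report by $\gamma > 0$ violates the core inequality, so the coalition is blocked and $i$ is assigned the null package; a downward perturbation leaves the weighted sum strictly below $\phi_g^2$, so $i$ still receives $\theta_j$. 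Both conditions of the pivotal definition are therefore met, and every local winner is pivotal. The case in which the global broker wins is symmetric: equation \eqref{eq:2} restricts $C \in [\phi_g^2, \sum_{i \in \mathbb{Q}} \omega_i \phi_i^2]$, bidder-optimality from the global's side forces $C = \sum_{i \in \mathbb{Q}} \omega_i \phi_i^2$, and the same two-sided perturbation argument applies.

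The main obstacle will be making precise the equivalence between the pivotal condition stated in terms of the reported bids $\phi_i^2$ (as in the excerpt) and the marginal analysis that most naturally takes place in the fee variables $c_i$. The resolution is that a bidder-optimal core outcome sits on the same hyperplane $\sum_{i \in \mathbb{Q}} \omega_i x_i = \phi_g^2$ in either coordinate system, so the flip-of-outcome argument under $\pm \gamma$ perturbations transfers cleanly between bids and fees. This is essentially the content of the general pivotal-pricing result of \cite{Ausubel2019}, which one could cite directly if one wanted a maximally terse proof.
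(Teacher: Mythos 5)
The paper does not actually prove this lemma; it is stated as an imported result from \cite{Ausubel2019}, so there is no in-text argument to compare yours against. Judged on its own terms, your proposal contains a correct geometric observation but does not establish the property the paper later relies on, and it has one genuine confusion.

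The gap is that you never derive the \emph{pricing} content of the pivotal pricing property, namely that a pivotal winner's fee equals his bid --- which is exactly what the paper uses in the proof of Proposition \ref{Proposition2} when it writes $c_i = \phi_i^2$. Your argument shows that a bidder-optimal core point lies on the hyperplane $\sum_{i\in\mathbb{Q}}\omega_i c_i = \phi_g^2$ and that perturbing a quantity by $\pm\gamma$ flips the allocation, but (i) the flip-of-outcome claim is only true at the tie-break $\phi_g^2=\sum_{i\in\mathbb{Q}}\omega_i\phi_i^2$, which is a hypothesis of the pivotal definition, not something you need bidder-optimality to force; and (ii) you perturb the fees $c_i$ while the pivotal definition perturbs the bids $\phi_i^2$, and your proposed bridge (``both sit on the same hyperplane'') does not by itself give the coordinate-wise conclusion. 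The clean argument needs the individual-rationality lower bounds from equation \eqref{eq:2}: at the pivotal point one has $c_i\ge\phi_i^2$ for every $i$ and $\sum_{i\in\mathbb{Q}}\omega_i c_i\le\phi_g^2=\sum_{i\in\mathbb{Q}}\omega_i\phi_i^2$ with $\omega_i>0$, so the core collapses to the single point $c_i=\phi_i^2$ for all $i$; hence \emph{any} core-selecting rule --- bidder-optimal or not --- pays a pivotal winner exactly his bid, and the $\pm\gamma$ perturbation of the bid then trivially moves the coalition from winning to losing. Your invocation of bidder-optimality is therefore both unnecessary and misleading: the lemma quantifies over all core-selecting auctions, and the collapse of the core at the tie-break, not optimality, is what does the work.
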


It is not hard to prove that the Dynamic-NVCG pricing rule results in allocations belonging to the bidder-optimal-frontier and minimizes misreporting incentives. 

\begin{lemma} \label{lemma4.2}
	
The Dynamic-NVCG pricing rule, $c (\phi_\ell^2,\phi_g^2)$, lies on the bidder-optimal frontier.
\end{lemma}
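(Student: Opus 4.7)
The plan is to verify directly the identity $\sum_{i\in\mathbb{Q}} \omega_i c_i = \phi_g^2$, which, as stated immediately after equation (\ref{eq:2}), characterizes the bidder-optimal frontier, for the fees assigned by the D-NVCG rule in equation (\ref{eq:5}). The argument is essentially algebraic bookkeeping: reduce the weighted fee sum to a constant NVCG baseline plus two D-NVCG correction blocks, and show that the blocks cancel in the $\omega$-weighted aggregate.

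The first step is to observe that $\Delta_i$ in equation (\ref{eq:5}) does not actually depend on $i$; call the common value $\Delta = \sum_{j\in\mathbb{Q}} \omega_j c_j^V - \phi_g^2$. Since the $q$ packages partition $\Theta$, the weights satisfy $\sum_{i\in\mathbb{Q}} \omega_i = 1$, and hence
\begin{equation*}
\sum_{i\in\mathbb{Q}} \omega_i (c_i^V - \Delta) \;=\; \sum_{i\in\mathbb{Q}} \omega_i c_i^V - \Delta \;=\; \phi_g^2.
\end{equation*}
So the frontier identity reduces to showing that the weighted sum of the D-NVCG corrections vanishes. Splitting $\mathbb{Q}=\mathbb{Q}^d\cup\mathbb{Q}^u$: the upward bump $\sum_{j\in\mathbb{Q}^u}\epsilon_j \big/ \sum_{k\in\mathbb{Q}^d}\omega_k$ applied to each $i\in\mathbb{Q}^d$ has weighted sum over $\mathbb{Q}^d$ equal to $\sum_{j\in\mathbb{Q}^u}\epsilon_j$, by telescoping of the common denominator. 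The weighted penalty on $\mathbb{Q}^u$ sums to $\sum_{i\in\mathbb{Q}^u} \omega_i(\phi_i^1 - c_i^V) = \sum_{i\in\mathbb{Q}^u} \epsilon_i$ under the definition $\epsilon_j = \omega_j(\phi_j^1 - c_j^V)$. These two totals are equal in magnitude and opposite in sign, so they cancel, yielding $\sum_{i\in\mathbb{Q}} \omega_i c_i = \phi_g^2$.

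The main obstacle is tracking the $\omega_i$ factors consistently across the two branches of equation (\ref{eq:5}) so that the bumps on $\mathbb{Q}^d$ really cancel the penalties on $\mathbb{Q}^u$ in the weighted sum; the computation is routine but the sign and weight conventions must be applied as in Example 2. I would also include a brief verification that each $c_i$ remains inside the core interval $[\phi_i^2, c_i^V]$ of equation (\ref{eq:2})—in particular, the second-round cap $\phi_i^2 \leq \phi_i^1$ keeps the downward correction on $\mathbb{Q}^u$ bidders from violating individual rationality—so that the point not only satisfies the affine identity but lies on the bidder-optimal frontier of the core itself.
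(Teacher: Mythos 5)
Your proof is correct and takes essentially the same route as the paper's: a direct algebraic verification that the $\omega$-weighted sum of the D-NVCG fees collapses to the NVCG baseline $\sum_{i\in\mathbb{Q}}\omega_i c_i^V-\Delta=\phi_g^2$ because the weighted bumps on $\mathbb{Q}^d$ exactly cancel the weighted penalties on $\mathbb{Q}^u$. Your write-up is in fact slightly more careful than the paper's (you make explicit the use of $\sum_{i\in\mathbb{Q}}\omega_i=1$ and the constancy of $\Delta_i$, and you flag the core-membership check $c_i\in[\phi_i^2,c_i^V]$ that the paper silently omits), but the underlying argument is identical.
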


\begin{proof} From equation (\ref{eq:5}), the total sum of local brokers' commission is:
\begin{align*}
	&	\sum_{i\in \mathbb{Q}^d}\omega_{i}c_{i}^V+ \sum_{i\in \mathbb{Q}^d}\omega_{i}\dfrac{
		\sum\limits_{j\in \mathbb{Q}^u}\omega_j(\phi_j^1 -c_j^V)}{\sum\limits_{i\in \mathbb{Q}^d }\omega_i} - \sum_{i\in \mathbb{Q}^d}\omega_{i} \sum_{i\in \mathbb{Q}^d}\Delta_{i} \\ &+ \sum_{j\in \mathbb{Q}^u}\omega_{j}c_j^V - \sum_{j\in \mathbb{Q}^u}	\omega_{j}(\phi_j^1 -c_j^V) - \sum_{j\in \mathbb{Q}^u}\omega_{j}\sum_{j\in \mathbb{Q}^u}\Delta_{j} \\
	=&	\sum_{i\in \mathbb{Q}}\omega_{i}c_{i}^V - \sum_{i\in \mathbb{Q}}\Delta_{i}=\phi_g^2.
\end{align*}
Consequently, the Dynamic-NVCG rule always lies on the bidder-optimal frontier. 
\end{proof}
We do not disregard the Nearest-VCG; instead, we are using it as a touchstone to improve incentive compatibility further and reduce the degree of manipulation freedom in a two-stage framework \citep{Francisco2001}. The following Proposition explains why the Dynamic-NVCG rule is optimal for distributing the commission of local brokers when some brokers have perverse incentives in the first round.

\begin{proposition} \label{proposition1}
	For any local bidder, $i$ bidding above $\mathbb{E}[c_i^V|s_i,z]$ in the first round is always a weakly dominated strategy.
\end{proposition}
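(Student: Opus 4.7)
The plan is to establish weak dominance by exhibiting an explicit dominating strategy and comparing the two choices realization by realization. Specifically, for any $\bar\phi_i^1 > \mathbb{E}[c_i^V|s_i,z]$ I would take $\tilde\phi_i^1 = \mathbb{E}[c_i^V|s_i,z]$ as the alternative and, fixing arbitrary strategies of the other brokers, argue that $\tilde\phi_i^1$ yields a payoff at least as high in every state, strictly so on an event of positive probability.

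The first ingredient is qualification. Since round 1 is a first-price reverse auction, replacing $\bar\phi$ by $\tilde\phi$ weakly increases $i$'s chance of reaching round 2; on any state where $\bar\phi$ fails to qualify, $\tilde\phi$ is at least as good. Restricting to states where $i$ qualifies under both bids, I would invoke Lemma \ref{lemma1}: the second round is pivotal, so truthful bidding $\phi_i^2 = \alpha_i$ is weakly dominant and does not depend on $\phi_i^1$. Individual rationality at the participation stage implies $\alpha_i \leq \mathbb{E}[c_i^V|s_i,z] = \tilde\phi$, so the upper-bound constraint $\phi_i^2 \leq \phi_i^1$ is slack under either choice. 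In particular, the coalition's winning event and the realized value of $c_i^V$ coincide across the two branches.

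Conditional on the coalition winning, I would compare the realized commission from equation (\ref{eq:5}) through three exhaustive cases on the realization of $c_i^V$. If $c_i^V \geq \bar\phi$, then $i \in \mathbb{Q}^d$ under either bid, so the overbidders' set, the bonus ratio, and $\Delta_i$ all coincide and the commission is identical. If $\tilde\phi \leq c_i^V < \bar\phi$, then $\tilde\phi$ keeps $i$ inside $\mathbb{Q}^d$ with the bonus intact, while $\bar\phi$ moves $i$ into $\mathbb{Q}^u$ and swaps that bonus for the penalty $\omega_i(\bar\phi - c_i^V) > 0$. If $c_i^V < \tilde\phi$, then both bids leave $i$ in $\mathbb{Q}^u$, but the penalty under $\bar\phi$ exceeds the one under $\tilde\phi$ by $\omega_i(\bar\phi - \tilde\phi) > 0$. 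In all three cases the commission under $\tilde\phi$ weakly exceeds that under $\bar\phi$, with strict inequality whenever $c_i^V < \bar\phi$, an event of positive probability because $\bar\phi > \mathbb{E}[c_i^V|s_i,z]$.

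The main technical obstacle is that the interim release of qualifying bids makes others' round-2 play potentially depend on the realized $\phi_i^1$, so a priori the distribution of $c_i^V$ could shift between the two branches. I would neutralize this by noting that Lemma \ref{lemma1} pins down truthful round-2 bids, which are insensitive to the revealed $\phi_i^1$, reducing the comparison to the state-by-state statement above. A secondary point to justify is the slackness condition $\alpha_i \leq \mathbb{E}[c_i^V|s_i,z]$, which I would derive from individual rationality combined with $c_i \leq c_i^V$ under any core-selecting payment and $\mathbb{E}[c_i] \geq \alpha_i$ at the participation threshold.
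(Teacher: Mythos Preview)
Your three-case comparison of the D-NVCG payments is exactly the substance of the paper's argument. The paper compresses it into a two-line contradiction: assume the high first-round bid weakly dominates, substitute both branches of equation~(\ref{eq:5}) into the payoff~(\ref{equation1}), and observe that the penalty branch $c_i^V-\Delta_i-\epsilon_i$ can never exceed the bonus branch $c_i^V-\Delta_i+\sum_{j\in\mathbb{Q}^u}\epsilon_j/\sum_{i\in\mathbb{Q}^d}\omega_i$. On the central point you are aligned with the paper and considerably more explicit.

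The gap is your use of Lemma~\ref{lemma1}. Pivotal pricing says only that a local broker whose bid sits exactly at the tie $\sum_i\omega_i\phi_i^2=\phi_g^2$ is paid that bid; it does \emph{not} make truthful bidding $\phi_i^2=\alpha_i$ weakly dominant for local brokers. That is precisely the property core-selecting rules give up relative to VCG: Lemma~\ref{lemma:4} only rules out underbidding below $\alpha_i$, and Theorem~\ref{theorem4.1} exhibits second-round equilibria that differ from $\alpha_i$. So the two places where you lean on Lemma~\ref{lemma1}---to argue the cap $\phi_i^2\le\phi_i^1$ is slack under $\tilde\phi$, and to argue that rivals' round-2 bids (hence the realized $c_i^V$) are invariant to the disclosed $\phi_i^1$---are not supported by that lemma. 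The first can be repaired without truthfulness: since $c_i^V$ depends only on $(\phi_g^2,\phi_{-i}^2)$, not on $\phi_i^2$, you can dominate $(\bar\phi,\phi_i^2)$ by $(\tilde\phi,\min\{\phi_i^2,\tilde\phi\})$ and run your case split unchanged on any realization where the coalition already won. The second is harder: if opponents' second-round strategies genuinely condition on the revealed $\phi_i^1$, no first-round bid is dominated in the strict game-theoretic sense. The paper's proof simply does not engage with this issue, treating the second-round outcome as fixed across the two branches; you are right to flag it, but Lemma~\ref{lemma1} is not the tool that resolves it.
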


\begin{proof} Suppose not. Then for any broker $i$, a  bidding strategy $\phi_i^{1\prime}\leq\mathbb{E}[c_i^V|s_i,z]$ is weakly dominated by $\phi_i^1>\mathbb{E}[c_i^V|s_i,z]$. By substitution in equation ($\ref{equation1}$) the pricing rule of ($\ref{eq:5}$) for $\pi_i(\phi_i^2|s_i,z^\prime)\geq\pi_i^\prime(\phi_i^2|s_i,z^\prime)$ we have:
	
 \begin{equation*}
   \theta_j\cdot p^*	\left(  c_{i}^V - \Delta_i-\epsilon_i\right) > \theta_j \cdot p^*	\left( c_{i}^V - \Delta_{i} + \dfrac{\sum\limits_{j\in Q^u }\epsilon_j}{\sum\limits_{i\in Q^d }\omega_i}	\right)	
 \end{equation*}

	Solving this inequality results in absurd. Thus, any bidding in the first round above the VCG price is a weakly dominated strategy. \end{proof}

Additionally, \cite{Voliotis2020} establish in Proposition 3 of their paper that, for all monotonically decreasing strategies $\phi$ in the first round, it is a symmetric equilibrium for each local broker to bid above the expected price differential. While the study focuses on a single portfolio in a two-stage design, the strategic behavior of brokers in the first round aligns with what is observed in our model.

Next, we provide some necessary assumptions.

\begin{assumption}
    The valuations $\alpha$ of local brokers are perfectly correlated.
\end{assumption}

\begin{assumption}\label{assumption1}
	For any winning bidder $i$ and any bidding vector  ($\phi^2_1,\dots,\phi^2_q,\phi_g^2$), the pricing function $c_i(\phi_1^2,\dots,\phi_q^2,\phi_g^2)$ is continuous in all bids and differentiable in bidder's $i$ bid.
\end{assumption}
\cite{Bosshard2017a} have proved that the N-VCG rule does not always satisfy the non-decreasing condition, yet in our model, both rules are increasing in the bidder's $i$. This means bidders are less prone to aggressive bidding. Lemma \ref{lemma_3}  confirms the revenue monotonicity by \cite{Milgrom2006} and \cite{Day2008b}. We prove that the revenue monotonicity holds in the bidder-optimal-frontier even for more than two items for sale \citep{Lamy2010}.

\begin{lemma} \label{lemma_3}
    Under the pivotal pricing property, the pricing function $c_i$ for NVCG and D-NVCG of the local bidder $i$ is increasing in the bidder's $i$ bid.  
\end{lemma}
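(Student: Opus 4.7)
The plan is to compute $\partial c_i / \partial \phi_i^2$ directly from the closed forms in equations (\ref{eq:3}) and (\ref{eq:5}) and show that it is strictly positive whenever bidder $i$ is a pivotal winner. The pivotal pricing property (Lemma \ref{lemma1}) places us in the regime where the coalition wins and $c_i^V > 0$; in particular we may drop the $\max$ and use $c_i^V = (\phi_g^2 - \sum_{j \neq i} \omega_j \phi_j^2)/\omega_i$, which is \emph{independent} of bidder $i$'s own bid $\phi_i^2$, while for every $j \neq i$ we have $\partial c_j^V / \partial \phi_i^2 = -\omega_i/\omega_j$. Combining these with $\sum_{j \in \mathbb{Q}} \omega_j = 1$ immediately yields $\partial \Delta / \partial \phi_i^2 = -(q-1)\omega_i$, where $\Delta = \sum_{j} \omega_j c_j^V - \phi_g^2$ is the common Nearest-VCG correction.

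For the Nearest-VCG rule, $c_i = c_i^V - \Delta$, so $\partial c_i / \partial \phi_i^2 = 0 - (-(q-1)\omega_i) = (q-1)\omega_i > 0$ for $q \geq 2$, finishing that case. For the Dynamic-NVCG rule, I split on the partition from equation (\ref{eq:5}). If $i \in \mathbb{Q}^u$, then $c_i = c_i^V - \Delta - \epsilon_i$ with $\epsilon_i = \omega_i(\phi_i^1 - c_i^V)$ also independent of $\phi_i^2$, so the derivative remains $(q-1)\omega_i > 0$. If $i \in \mathbb{Q}^d$, then the extra bonus term $\big(\sum_{j \in \mathbb{Q}^u} \epsilon_j\big)/\sum_{k \in \mathbb{Q}^d} \omega_k$ has a constant denominator and a numerator whose derivative equals $+|\mathbb{Q}^u|\,\omega_i \geq 0$, computed by $\partial \epsilon_j / \partial \phi_i^2 = -\omega_j \cdot (-\omega_i/\omega_j) = \omega_i$ for each $j \in \mathbb{Q}^u$ (noting $j \neq i$). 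Hence $\partial c_i/\partial \phi_i^2 \geq (q-1)\omega_i > 0$, and strict monotonicity holds in either sub-case.

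The main obstacle I anticipate is that the partition $\mathbb{Q}^u, \mathbb{Q}^d$ is itself endogenous to $\phi_i^2$: since each $c_j^V$ ($j \neq i$) varies with $\phi_i^2$, some bidder $j$ may cross the threshold $\phi_j^1 = c_j^V$ and switch subsets, potentially invalidating the piecewise derivative argument. I will handle this by observing that precisely at such a crossing $\epsilon_j = \omega_j(\phi_j^1 - c_j^V) = 0$, so the numerator $\sum_{j \in \mathbb{Q}^u} \epsilon_j$ and the penalty term $\epsilon_i$ in the complementary sub-case both pass continuously through the switch, and the denominator $\sum_{k \in \mathbb{Q}^d} \omega_k$ only suffers a bounded jump at a point where the added summand contributes $0$. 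Consequently $c_i$ is continuous in $\phi_i^2$, strictly increasing on each piece of constant partition, and therefore strictly increasing globally, which gives the claimed monotonicity for both pricing rules and reconciles the result with the revenue-monotonicity findings of \cite{Milgrom2006} and \cite{Day2008b}.
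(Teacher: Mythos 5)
Your proposal is correct and follows the same route the paper takes, namely showing $\partial c_i/\partial\phi_i^2>0$ for both rules; the paper merely asserts this in one line, while your explicit derivatives $(q-1)\omega_i$ and $(q-1)\omega_i+|\mathbb{Q}^u|\,\omega_i/\sum_{k\in\mathbb{Q}^d}\omega_k$ coincide exactly with the marginal-commission expressions the paper itself computes later in the proof of Theorem \ref{theorem4.1}. Your additional treatment of the endogenous partition $\mathbb{Q}^u,\mathbb{Q}^d$ (continuity of $c_i$ at threshold crossings where $\epsilon_j=0$) is a genuine refinement that the paper's one-line proof silently skips, but it does not change the underlying argument.
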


\begin{proof}
    It follows immediately since ${c_i}^\prime>0$ in bidder's $i$ bid for both rules. This completes the proof of the lemma. 
\end{proof}

%\todo[inline]{Na grapsw ena lemma gia to non-decreasing kai to proof to diatirw sto report. Gia proof below Lemma, the proof is trivially follows something like that....}

%This can be easily confirmed from equation (\ref{eq:5}) and by Example 2 in Table \ref{tab1}. In particular, an increase of broker $1$' bid from 3 to 3.3, increases the NVCG to 4.38 and the Dynamic-NVCG from 4.72 to 5.13.     

In the VCG mechanism, brokers bid their valuation truthfully, and it is their weakly dominant strategy with no surplus. The following lemma suggests
the global broker has a weakly dominant strategy in the second round. 
\begin{lemma} \label{lemma:3}
	
	Suppose that Assumption \ref{assumption1} is satisfied. Then, for the restricted second-round auction and for $\upsilon(s_g,z^\prime)>0$, $\phi_g^2=\min\{\phi_g^1,\upsilon(s_g,z^\prime)\}$
	is a weakly dominant strategy for the global broker. 
\end{lemma}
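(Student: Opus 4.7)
The plan is to reduce the second round, from the global broker's viewpoint, to a standard second-price reverse auction with an upper cap inherited from the first round. The key structural observation is that when the global wins, i.e.\ when $\phi_g^2<C$ with $C=\sum_{i\in\mathbb{Q}}\omega_i\phi_i^2$, the fee he receives equals $C$, which is determined entirely by the locals' bids and does \emph{not} depend on $\phi_g^2$. His realized payoff can therefore be written as $(\Theta\cdot p^*)(C-\upsilon)\cdot\mathbf{1}_{\{\phi_g^2<C\}}$, using $\Theta\cdot\mathbb{E}[\Delta p|s_g,z^\prime]=(\Theta\cdot p^*)\upsilon$. Hence the global would ideally like his winning event to coincide with $\{C>\upsilon\}$, since exactly on that event the payoff is strictly positive.

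First, I handle the unconstrained case $\upsilon\leq\phi_g^1$, where the candidate $\phi_g^2=\upsilon$ is feasible. I fix any alternative feasible bid $\phi_g^{2\prime}\in[0,\phi_g^1]$ and compare outcomes pointwise in the realization of $C$. On $\{C<\min(\upsilon,\phi_g^{2\prime})\}\cup\{C>\max(\upsilon,\phi_g^{2\prime})\}$ the two bids induce identical outcomes. On the intermediate interval the behaviors diverge: if $\phi_g^{2\prime}>\upsilon$, then for $C\in(\upsilon,\phi_g^{2\prime})$ the candidate wins with payoff $(\Theta\cdot p^*)(C-\upsilon)>0$ while the deviation loses; if $\phi_g^{2\prime}<\upsilon$, then for $C\in(\phi_g^{2\prime},\upsilon)$ the deviation wins with payoff $(\Theta\cdot p^*)(C-\upsilon)<0$ while the candidate loses. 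In both subcases the candidate is pointwise weakly better, and weak dominance follows upon taking expectations.

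For the constrained case $\upsilon>\phi_g^1$, every feasible bid satisfies $\phi_g^2\leq\phi_g^1<\upsilon$, and the candidate is $\phi_g^1$. Comparing it against any $\phi_g^{2\prime}<\phi_g^1$, the two strategies agree on $\{C\leq\phi_g^{2\prime}\}\cup\{C>\phi_g^1\}$, while on $\{\phi_g^{2\prime}<C\leq\phi_g^1\}$ only the deviation wins, with payoff $(\Theta\cdot p^*)(C-\upsilon)<0$. Hence $\phi_g^2=\phi_g^1$ weakly dominates under the cap, and combining the two cases yields $\phi_g^2=\min\{\phi_g^1,\upsilon(s_g,z^\prime)\}$. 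I expect the main obstacle to be conceptual rather than technical: one must recognize that the fee received upon winning is independent of the global's own bid, which is essentially the pivotal pricing property from Lemma~\ref{lemma1}. Continuity of the pricing function in Assumption~\ref{assumption1}, combined with the standard tie-breaking rule, ensures that the measure-zero events $\{C=\upsilon\}$ and $\{C=\phi_g^1\}$ do not affect the comparison.
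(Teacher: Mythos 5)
Your proposal is correct and follows essentially the same route as the paper: the paper's proof simply invokes the standard result that bidding one's valuation is weakly dominant under VCG-style (second-price) payment, truncated by the first-round cap, and your argument is the fully spelled-out version of exactly that observation, resting on the fact that the fee $C=\sum_{i\in\mathbb{Q}}\omega_i\phi_i^2$ received upon winning is independent of the global's own bid. The only difference is one of detail, not of substance -- you make explicit the pointwise case analysis and the handling of the constrained case $\upsilon>\phi_g^1$, which the paper leaves implicit.
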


\begin{proof}
	By design, no broker can bid higher than his first-round bid in the second round. 
	For the VCG mechanism, it is a weakly dominant strategy for the global bidder to bid his \enquote{valuation}, which in our case equals to
	$ \upsilon (s_g,z^\prime)$. %The same argument is followed by \cite{Voliotis2020} in Proposition 2. 
	The result follows directly. 
	\end{proof}

Whereas for the local bidders, bidding lower than their expected losses is always worse off, $\alpha_i(s_i,z^\prime)>0$.

\begin{lemma} \label{lemma:4}
	Suppose that Assumption \ref{assumption1}  and the pivotal pricing property are satisfied.
	Then, for each local broker $i$ any bid $\phi_i^2\in\big[0, \min\{\phi_i^1,\alpha_i(s_i,z^\prime)\}\big)$ is a weakly dominated strategy. 
\end{lemma}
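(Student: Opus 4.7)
The plan is to show that every bid $\phi_i^2$ in the interval $\big[0,\min\{\phi_i^1,\alpha_i(s_i,z^\prime)\}\big)$ is weakly dominated by the single alternative
\[
\phi_i^{\star}:=\min\{\phi_i^1,\alpha_i(s_i,z^\prime)\},
\]
which is itself a feasible second-round bid because it respects the first-round cap $\phi_i^2\leq\phi_i^1$. I would fix an arbitrary profile of opposing bids $(\phi_{-i}^2,\phi_g^2)$ and compare payoffs at $\phi_i^2$ and $\phi_i^{\star}$ pointwise, using the compact form $(\theta_j\cdot p^*)(c_i-\alpha_i)\cdot\mathbf{1}_{\{\sum_k\omega_k\phi_k^2<\phi_g^2\}}$ of \eqref{equation1}, obtained by factoring $\theta_j\cdot\mathbb{E}[\Delta p]=(\theta_j\cdot p^*)\alpha_i$. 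A three-case split on the winning indicator exhausts all possibilities.

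If the coalition wins under both bids, Lemma \ref{lemma_3} gives that $c_i$ is non-decreasing in broker $i$'s own bid, so $c_i(\phi_i^{\star})\geq c_i(\phi_i^2)$ and the payoff is weakly larger at $\phi_i^{\star}$. If the coalition loses under both, each payoff is zero. The decisive case is the third: the coalition wins under $\phi_i^2$ but loses under $\phi_i^{\star}$. The losing condition $\omega_i\phi_i^{\star}+\sum_{j\neq i}\omega_j\phi_j^2\geq\phi_g^2$ rearranges to $c_i^V=\max\{0,(\phi_g^2-\sum_{j\neq i}\omega_j\phi_j^2)/\omega_i\}\leq\phi_i^{\star}\leq\alpha_i$. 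Combined with the core upper bound $c_i\leq c_i^V$ imposed after \eqref{eq:2}, this yields $c_i\leq\alpha_i$, so the payoff at $\phi_i^2$ is $(\theta_j\cdot p^*)(c_i-\alpha_i)\leq 0$, while at $\phi_i^{\star}$ it is exactly zero. Strict inequality holds in any realization where the losing condition is strict (which forces $c_i^V<\phi_i^{\star}\leq\alpha_i$), so $\phi_i^{\star}$ strictly dominates $\phi_i^2$ in at least one opposing profile, delivering weak dominance.

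The main obstacle is exactly this third case, since at first sight lowering one's bid to $\phi_i^{\star}$ looks as though it could sacrifice a profitable win. What disarms that concern is the pivotal pricing property from Lemma \ref{lemma1}: precisely at the deviation that would tip the allocation away from the coalition, the VCG ceiling $c_i^V$ has already collapsed to at most $\phi_i^{\star}\leq\alpha_i$, and the core rule can never award a commission above this ceiling. Any \enquote{forgone win} therefore comes bundled with a commission that does not cover the expected loss $\alpha_i$, so it was never profitable to begin with. With this observation the argument reduces to the short chain of inequalities in Case~C, and no further machinery is needed.
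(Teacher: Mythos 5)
Your proof is correct, and it reaches the paper's conclusion by a sharper route. The paper also dominates a low bid by the capped truthful bid (it treats $\alpha_i(s_i,z^\prime)\geq\phi_i^1$ separately and otherwise compares against $\hat\phi_i^2=\alpha_i(s_i,z^\prime)$), but it compresses the whole argument into a two-step chain of expectation inequalities: monotonicity of $c_i$ in the own bid gives the first step, and the pivotal pricing property (commission equal to the bid at the pivotal point) gives the second, so that the expected surplus from any lower bid is at most zero. What that chain leaves implicit is precisely your Case~C --- the profiles on which raising the bid to $\phi_i^{\star}$ flips the coalition from winning to losing --- and that is where the substance of the lemma lives. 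Your pointwise three-way split on the winning indicator handles it explicitly: you read off from the losing condition that $c_i^V\leq\phi_i^{\star}\leq\alpha_i$, invoke the core ceiling $c_i\leq c_i^V$ from equation \eqref{eq:2}, and conclude that the forgone win could never have covered the expected loss. This buys an ex-post, profile-by-profile argument that does not need to pass through expectations, and it also supplies the strictness witness required for genuine weak dominance, which the paper's proof omits (its closing sentence even states the dominance relation backwards). Two minor caveats: when $\min\{\phi_i^1,\alpha_i(s_i,z^\prime)\}=0$ the interval in the statement is empty, so your strictness discussion should be read as conditional on $\phi_i^{\star}>0$; and your Case~A leans on Lemma \ref{lemma_3}, whose own justification in the paper is terse, though the monotonicity it asserts can be verified directly from the dependence of $\Delta_i$ on $\phi_i^2$.
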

\begin{proof}
	For an arbitrary broker $i$, let $\alpha_i(s_i,z^\prime)\geq\phi_i^1$. Then, for any strategy, $\phi_i^2\leq\phi_i^1$ is trivially weakly dominated and obtains negative surplus. Suppose now that for broker $i$, it is  $\alpha_i(s_i,z^\prime)<\phi_i^1$.
	For the local broker $i$ with $\hat{\phi_i^2}=\alpha_i(s_i,z^\prime)$ and 
	$\hat{\phi_i^2}>\phi_i^2$, we prove that bidding $\phi_i^2$ is weakly dominated. By \textit{Assumption} \ref{assumption1}, it will always result in  $c_i(\phi_\ell^2,\phi_g^2)\leq c_i(\hat{\phi_\ell^2},\phi_g^2)$, and from the pivotal pricing property, it follows:
	\small
	\begin{align*}
		\mathbb{E}\big[\theta_j \cdot p^* \cdot c_i(\phi_\ell^2,\phi_g^2)-\theta_j\cdot \Delta p(s,z^\prime)\big] &\leq\mathbb{E}\big[\theta_j\cdot p^* \cdot c_i(\hat{\phi_\ell^2},\phi_g^2) -\theta_j\cdot\Delta p (s,z^\prime)\big]\\
		&\leq\mathbb{E}\big[\theta_j\cdot p^* \cdot \frac{\theta_j \cdot\Delta p( s,z^\prime)}{\theta_j \cdot p^*} -\theta_j \cdot \Delta p(s,z^\prime)\big]=0.
	\end{align*}
	Thus, any $\hat{\phi_i^2}>\phi_i^2$
	is weakly dominated. 
	\end{proof}
 
Next, following \cite{Ausubel2019} we derive the necessary optimality conditions. Suppose that all local brokers $j\neq i$ bid according to the profile $(\phi_{j}^{2*})_{j \neq i}$. Let $H_i\equiv\mathit{H_i}\big[\phi_i^2,\alpha_i(s_i,z^\prime)\big]$ be the probability of winning for a local bidder $i$ who bids $\phi_i^2\in[\alpha_i(s_i,z^\prime),\phi_i^1\big]$, and its marginal probability  $h_i\equiv\mathit{h_i}\big [\phi_i^2,\alpha_i(s_i,z^\prime)\big]$:

\begin{center}
	\begin{align} \label{eqation4}
		H_i &=
		\Pr\big(\omega_i\phi_i^2 +\sum_{j \neq i}\omega_j\phi_j^{2*}\leq \upsilon(s_g,z^\prime) \big|\alpha_i(s_i,z^\prime) \big) \nonumber \\ \nonumber\\	h_i&= \dfrac{\partial H_i\big(\phi_i^2,\alpha_i(s_i,z^\prime)\big)}{\partial\phi_i^2}\
	\end{align}
\end{center}

Also, we denote the expected commission fee of each local bidder $i$ with $c_i\equiv C_i \big(\phi_i^2, \alpha_i(s_i,z^\prime)\big)$ and with $MC_i \equiv M C_i\big(\phi_i^2, \alpha_i(s_i,z^\prime)\big)$ the expected marginal commission when each local broker $i$ bids $\phi_i^2\in\big[\alpha_i(s_i,z^\prime), \phi_i^1\big]$.

\begin{align}\label{equation7}
	C_i &= \mathbb{E} \big[c_i \big(\phi_i^2, \sum_{j \neq i}\phi_j^{2*}, \upsilon(s_g,z^\prime)\big) \big|\alpha_i(s_i,z^\prime)\big]	\nonumber\\ \\
	M C_i&= \mathbb{E}\bigg[\dfrac{\partial c_i\big(\phi_i^2, \sum\limits_{j \neq i}\phi_j^{2*},  \upsilon(s_g,z^\prime)\big)}{\partial \phi_i^2} \big|\alpha_i(s_i,z^\prime) \nonumber\bigg].
\end{align}

%Intuitively, $M C_i(\phi_i, \dfrac{\Delta p}{p^*})$ is interpreted a small increase in the bid increases asset manager's cost by increasing the expected commission fee.

The expected marginal commission expresses any change in the expected commission arising from
the incremental increase in the bidding $\phi_i^2$. For instance, if brokers anticipate a loss in the expected prices, they will counterbalance their payoff by moving their bid upwardly.  

Next, we define the first-order optimality conditions for the local broker's maximization problem on the steps of \cite{Ausubel2019}.

\begin{proposition}\label{Proposition2}
	Under Assumption \ref{assumption1} and the pivotal pricing property, the optimality condition for choosing $0<\phi_i^2\leq \phi_i^1$ for a local bidder $i$ is given by:
	\begin{align}\label{equation9}
		MC_i&= \bigg( \alpha_i(s_i,z^\prime) - \phi_i^2\bigg)h_i.
	\end{align}	 
	
\end{proposition}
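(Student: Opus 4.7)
The plan is to set up local broker $i$'s expected second-round payoff conditional on his signal $s_i$ and the interim public information $z'$, differentiate with respect to $\phi_i^2$, and invoke the pivotal pricing property of Lemma~\ref{lemma1} to evaluate the boundary contribution.

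First, I would start from equation~(\ref{equation1}) and take the conditional expectation. Using the definition of $\alpha_i$ to replace $\theta_j\cdot\mathbb{E}[\Delta p]$ by $\alpha_i(s_i,z')(\theta_j\cdot p^*)$, and writing $W(\phi_i^2)=\{(\phi_{-i}^{2*},\upsilon)\colon\omega_i\phi_i^2+\sum_{j\ne i}\omega_j\phi_j^{2*}\le\upsilon\}$ for the coalition's winning region, the expected payoff is
$$\Pi_i(\phi_i^2)=(\theta_j\cdot p^*)\int_{W(\phi_i^2)}\Bigl[c_i\bigl(\phi_i^2,\phi_{-i}^{2*},\upsilon(s_g,z')\bigr)-\alpha_i(s_i,z')\Bigr]\,dF,$$
where $F$ is the joint distribution of $(\phi_{-i}^{2*},\upsilon)$ conditional on $\alpha_i(s_i,z')$.

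Next, Assumption~\ref{assumption1} guarantees $c_i$ is continuous in all bids and differentiable in $\phi_i^2$, so a Leibniz-type differentiation under the integral splits $d\Pi_i/d\phi_i^2$ into an interior part over $W$ and a flux term across $\partial W$:
$$\frac{1}{\theta_j\cdot p^*}\,\frac{d\Pi_i}{d\phi_i^2}=\int_{W(\phi_i^2)}\frac{\partial c_i}{\partial\phi_i^2}\,dF \;+\; \text{(boundary term)}.$$
The interior integral is precisely $MC_i$ from~(\ref{equation7}), because $c_i\equiv 0$ off $W$ by~(\ref{eq:2}).

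To evaluate the boundary term, I would observe that $\partial W$ is the locus $\omega_i\phi_i^2+\sum_{j\ne i}\omega_j\phi_j^{2*}=\upsilon$, which is exactly the pivotal configuration in the definition preceding Lemma~\ref{lemma1}: a marginal decrease in $\phi_i^2$ retains the package $\theta_j$, while a marginal increase delivers it to the global. By the pivotal pricing property, $c_i=\phi_i^2$ on $\partial W$, so the integrand there reduces to $\phi_i^2-\alpha_i$. Since $H_i$ is decreasing in $\phi_i^2$, the signed mass of rival profiles leaving $W$ per unit increase of $\phi_i^2$ is $-h_i>0$, giving a boundary term of $(\phi_i^2-\alpha_i)h_i$. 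Setting $d\Pi_i/d\phi_i^2=0$ then yields $MC_i+(\phi_i^2-\alpha_i)h_i=0$, which rearranges to the stated identity.

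The main subtlety I anticipate is the sign bookkeeping in the boundary term, together with verifying that the integrand on $\partial W$ collapses to $\phi_i^2-\alpha_i$ rather than to the larger $c_i^V-\alpha_i$ that would arise at an interior winning profile. This identification of the commission at a pivotal configuration with the broker's own bid is exactly what Lemma~\ref{lemma1} provides in the local--global environment, and it is what produces the clean ``bid-shading'' form $MC_i=(\alpha_i-\phi_i^2)h_i$ familiar from single-object first-order conditions.
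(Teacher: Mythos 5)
Your proposal is correct and follows essentially the same route as the paper: write the expected payoff, differentiate in $\phi_i^2$, split the derivative into the marginal-commission term $MC_i$ and the term coming from the change in the winning probability, and use the pivotal pricing property to set $c_i=\phi_i^2$ at the marginal (tie-breaking) configuration, yielding $MC_i=(\alpha_i-\phi_i^2)h_i$. Your Leibniz/boundary-flux formulation is in fact a slightly more careful rendering of the paper's product-rule computation, since it makes explicit why the substitution $c_i=\phi_i^2$ is legitimate only in the $h_i$-term and not in the interior integral defining $MC_i$.
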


\begin{proof}
	We apply the optimality condition on the expected payoff of equation (\ref{equation1}) upon the probability of winning.
	
	\[E[\pi_i(\phi_i^2|s_i,z^\prime)]=\bigg[\theta_j\cdot p^* \cdot c_i - \theta_j \cdot \mathbb{E}[\Delta p|s_i,z^\prime]\bigg]H_i,\]
	
	with $0\leq\phi_i^2\leq\phi_i^1$
	
	Following equations \eqref{eqation4} and \eqref{equation7} and maximizing with respect to $\phi_i^2$ yields the first-order condition: 
	\begin{align*}
		\dfrac{\partial E[\pi_i(\phi_i^2|s_i,z^\prime)]}{\partial \phi_i^2}&=
		\big[\theta_j \cdot p^* \cdot \dfrac{\partial c_i}{\partial\phi_i^2}\big] \cdot H_i + 	\bigg[\theta_j\cdot p^*\cdot c_i - \theta_j\cdot \mathbb{E}[\Delta p|s_i,z^\prime]\bigg]\cdot h_i \\
		&=
		\theta_j \cdot p^* MC_i + \theta_j\cdot p^* \cdot c_i \cdot h_i - \theta_j\cdot \mathbb{E}[\Delta p|s_i,z^\prime] \cdot h_i=0.
	\end{align*}

	By Lemma \ref{lemma:4}, $\phi_i^2$ is always nonnegative and the second-order conditions are trivially satisfied. Due to \textit{Assumption} \ref{assumption1} and the pivotal pricing property the following is in effect: 
	\[c_i\equiv c_i\big(\omega_i\phi_i^2,\sum\limits_{j\neq i}\omega_j\phi_j^{2*},\phi_i^2+\sum\limits_{j\neq i}\omega_j\phi_j^{2*}\big)=\phi_i^2.\] %\todo{ayto mipws pio aplo}
	
	Thus, it is easy to conclude that:
	\begin{align*}
		\theta_j \cdot p^* MC_i&= \theta_j\cdot \mathbb{E}[\Delta p|s_i,z^\prime]\cdot h_i - \theta_j\cdot  p^* \cdot \phi_i^2 \cdot h_i \\
		MC_i&= \bigg( \dfrac{\theta_j\cdot \mathbb{E}[\Delta p|s_i,z^\prime]}{	\theta_j\cdot p^*} - \phi_i^2\bigg)\cdot h_i.
	\end{align*} 
	
	\end{proof}
%\todo[inline]{The following paragraph must be restated. The local does not have the winning option in the sense that participates in the coalition. See Proposition 2 of Ausubel. Why is it optimal to bid equal to the valuation for both pricing rules? Which are the conditions that need to hold?}
Intuitively,
if $MC_i<0$ it means that  
$\alpha_i(s_i,z^\prime) < \phi_i^2$, and 
broker $i$ is not included among the winners. Otherwise, if $\phi_i^2<\alpha_i(s_i,z^\prime) $, broker $i$ will have to increase his bidding fee to reach the optimal payoff where $MC_i=0$.

\begin{theorem}\label{theorem4.1}
	For each pricing rule exists an equilibrium where the bidding function of each broker is given by:
	\begin{enumerate} [label=(\alph*)]
		\item for the NVCG rule
		\begin{equation}\label{equation10}
			\phi_i^2=
			\begin{cases}
				\alpha_i(s_i,z^\prime) -
				\sigma_i \omega_i(q-1) &\mbox{,if $\alpha_i>0$}	\\
				0 &\mbox{,if $\alpha_i\leq0$}
			\end{cases}		
		\end{equation}
		\item for the D-NVCG rule
		\begin{equation}\label{equation11}
			\phi_i^2=\begin{cases}
				\alpha_i(s_i,z^\prime) -\sigma_i \omega_i(q-1) & \mbox{,if $\phi_i^1>c_i^V$ and $\alpha_i>0$}\\ 
				
				\alpha_i(s_i,z^\prime) - \sigma_i \omega_i\bigg[\dfrac{ \ell}{\sum\limits_{i\in Q^d}\omega_{i}}+ (q-1)\bigg] & \mbox{,if $\phi_i^1\leq c_i^V$ and $\alpha_i>0$} \\
				0& \mbox{,if $\alpha_i\leq0$}
			\end{cases}
		\end{equation}
		\\
		where $\dfrac{1}{\sigma_i}\equiv \dfrac{h_i}{H_i}$ is a reverse hazard rate and with $\ell$ to be the number of local bidders with $\phi_i^1>c_i^V$.  
	\end{enumerate}
\end{theorem}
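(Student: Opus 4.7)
The plan is to derive the equilibrium bids by applying the first-order condition of Proposition~\ref{Proposition2} to the payment rules (\ref{eq:3}) and (\ref{eq:5}), then to handle the boundary $\alpha_i\le 0$ using Lemma~\ref{lemma:4}. Since $\partial c_i/\partial\phi_i^2$ is deterministic in this setup, the FOC $MC_i=(\alpha_i-\phi_i^2)h_i$ reduces to $\partial c_i/\partial\phi_i^2=(\alpha_i-\phi_i^2)/\sigma_i$, so the whole problem comes down to computing the slope of each pricing rule with respect to bidder $i$'s own second-round bid.

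For part (a), I would write $c_i=c_i^V-\Delta$ with $\Delta=\sum_{j\in\mathbb{Q}}\omega_jc_j^V-\phi_g^2$. The VCG payment $c_i^V=\max\{0,(\phi_g^2-\sum_{k\neq i}\omega_k\phi_k^2)/\omega_i\}$ does not contain $\phi_i^2$, so $\partial c_i^V/\partial\phi_i^2=0$; for any $j\neq i$, $\partial c_j^V/\partial\phi_i^2=-\omega_i/\omega_j$. Summing the weighted derivatives in $\Delta$ yields $\partial\Delta/\partial\phi_i^2=-(q-1)\omega_i$ and hence $\partial c_i/\partial\phi_i^2=(q-1)\omega_i$. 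Substituting into the FOC gives $\phi_i^2=\alpha_i-\sigma_i\omega_i(q-1)$, which is (\ref{equation10}); the corner $\phi_i^2=0$ when $\alpha_i\le 0$ is immediate from Lemma~\ref{lemma:4}.

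For part (b), the rule (\ref{eq:5}) is piecewise, so I would split on $i\in\mathbb{Q}^u$ and $i\in\mathbb{Q}^d$. If $i\in\mathbb{Q}^u$, the penalty $-(\phi_i^1-c_i^V)$ depends on $\phi_i^2$ only through $c_i^V$, whose own-bid derivative vanishes, so the slope reduces to the NVCG slope $(q-1)\omega_i$ and the bid function coincides with (\ref{equation10}). If $i\in\mathbb{Q}^d$, the additional reward $\sum_{j\in\mathbb{Q}^u}\omega_j(\phi_j^1-c_j^V)/\sum_{k\in\mathbb{Q}^d}\omega_k$ contributes an extra derivative: for each $j\in\mathbb{Q}^u$, $\partial c_j^V/\partial\phi_i^2=-\omega_i/\omega_j$, and summing gives the extra slope $\ell\omega_i/\sum_{k\in\mathbb{Q}^d}\omega_k$ with $\ell=|\mathbb{Q}^u|$. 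Adding the two contributions and substituting into the FOC reproduces the nonzero lines of (\ref{equation11}).

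The main obstacle is the D-NVCG case, where the pricing rule fails to be differentiable at the threshold $\phi_i^1=c_i^V$: beyond the interior FOCs I must verify that no bidder finds it profitable to jump across the threshold given the opponents' equilibrium strategies. Lemma~\ref{lemma_3} provides the monotonicity of $c_i$ in $\phi_i^2$ for both rules, and Lemma~\ref{lemma4.2} together with the pivotal pricing property (Lemma~\ref{lemma1}) secures the identity $c_i=\phi_i^2$ on the bidder-optimal frontier used in the derivation of Proposition~\ref{Proposition2}, so the derived bid functions are indeed mutual best replies on their respective branches and the equilibrium exists.
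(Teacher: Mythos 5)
Your proposal is correct and follows essentially the same route as the paper: substitute each pricing rule into the first-order condition of Proposition \ref{Proposition2}, compute the own-bid slope $\partial c_i/\partial\phi_i^2$ (which is $(q-1)\omega_i$ for NVCG and for $i\in\mathbb{Q}^u$, plus the extra term $\ell\omega_i/\sum_{k\in\mathbb{Q}^d}\omega_k$ for $i\in\mathbb{Q}^d$), and solve for $\phi_i^2$. Your additional remarks on the $\alpha_i\le 0$ corner and the non-differentiability at the $\mathbb{Q}^u$/$\mathbb{Q}^d$ boundary are points the paper handles only implicitly via Assumption \ref{assumption1}, but they do not change the argument.
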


\begin{proof}
	The optimality conditions are given by equation (\ref{equation9}).
	\begin{enumerate} [label=(\alph*)]
		\item From equation (\ref{eq:3}) of the \textit{Nearest-VCG rule}, the expected marginal commission of equation (\ref{equation7}) for a broker $i$ is:

		\begin{align*}
			MC_i&=[c^V-\Delta_i]^\prime H_i\\
			&= \omega_i (q-1) H_i.
		\end{align*}
		
		Replacing the above to (\ref{equation9}) we result to the equilibrium bid:
		
		\begin{align}\label{equation10}
			\centering
			\nonumber
			\phi_i^2 =\dfrac{\theta_j\cdot \mathbb{E}[\Delta p|s_i,z^\prime]}{	\theta_j\cdot p^*}- \omega_i(q-1)\dfrac{H_i}{h_i}.
		\end{align}
		\\
		\item For the D-NVCG
		if $\phi_i^1>c_i^V$ the equilibrium bidding is similar to the \textit{Nearest-VCG} of equation (\ref{equation10}). However, if there are $\ell$ number of locals with $\phi_j^1>c_j^V$ and bidder's $i$ bid in the first round is $\phi_i^1\leq c_i^V$, then the expected marginal commission from equation ($\ref{eq:5}$)  and for a symmetric $\omega_{i}=\omega_{j}$ is given by:

		\[M C_i=\omega_i\bigg[\dfrac{\ell}{\sum\limits_{i\in Q^d}\omega_{i}}+ (q-1)\bigg]H_i.\]

		By substitution in the optimality conditions of equations (\ref{equation9}) we conclude:
		
		\begin{equation*}
			\phi_i^2=\dfrac{\theta_j\cdot \mathbb{E}[\Delta p|s_i,z^\prime]}{	\theta_j\cdot p^*}- \omega_i\bigg[\dfrac{  \ell}{\sum\limits_{i\in Q^d}\omega_{i}}+ (q-1)\bigg]\dfrac{H_i}{h_i}.
		\end{equation*}
	\end{enumerate} 
	\end{proof}

In Theorem \ref{theorem4.1}, we proved the existence of equilibrium for Nearest-VCG and the Dynamic-Nearest-VCG. In both cases, we have shown that when the portfolio is sliced into many packages, the equilibrium bid of a winning broker is negatively affected by the number of packages and their size in the overall portfolio.

\textbf{Example.} Suppose that the distribution of global brokers is $F_g(\upsilon)=(\frac{\upsilon}{\Bar{\upsilon}})^\lambda$, where $\upsilon$ is the valuation of global broker and is drawn from the distribution on $[0,\Bar{\upsilon}]$ and $\lambda>1$, the parameter $\lambda$ controls the distributional strength of the global broker \citep{Ausubel2019}. We prove that it is an equilibrium for brokers to bid their valuation (i.e., $\phi^*=\alpha$).

Since the valuations of the local brokers are perfectly correlated in the second round, let denote $h_i(\phi_i^2, \alpha_i)=f_g (\omega\phi_i^2 + \sum_{j \neq i} \omega_j\phi_j^* (\alpha_i))$, where $\phi^*$ is the equilibrium bid, whereas $\phi^2_i$ is the highest bid (and simultaneously the lowest bid) for the symmetric equilibrium, with a slight abuse of notation, we denote $\phi^2_i$ by $\phi_i$.   \\
 Then the probability for the local brokers to win lies:

\[\omega_i\phi_i+\sum_{j \neq i} \omega_j{\phi^*_j} \leq\phi^2_g\leq \phi_i\]

\emph{Mutadis mutandis}, after the necessary manipulations, we derive the optimality conditions.

\begin{enumerate}[label=(\roman*)]

%%%%----------NVCG proof---------------------------
    \item NVCG rule:
    
\begin{align} \label{op1}
(\alpha_i -\phi_i) f_g(\omega_i\phi_i + \sum_{j \neq i} \omega_j{\phi^*_j}) = (q-1) \omega_i \big[F_g(\phi_i)- F_g(\omega_i\phi_i + \sum_{j \neq i} \omega_j{\phi^*_j})\big]
\end{align}

we rearrange \eqref{op1} and we denote 

\[ \frac{\lambda}{\omega_i(q-1)} (\alpha_i - \phi_i) = L(\phi_i).\]

where

\begin{align*}
   L(\phi_i)= [\omega_i\phi_i + \sum_{j \neq i} \omega_j{\phi^*_j}][\phi_i^\lambda(\omega_i\phi_i + \sum_{j \neq i} \omega_j{\phi^*_j})^{-\lambda} -1]
\end{align*}

If $\phi_i \in [0,\phi^*)$ then 

\[ \frac{\lambda}{\omega_i(q-1)} (\alpha_i - \phi_i) > L(\phi_i).\]

Similarly,
if $\phi_i \in [\phi^*, \phi_i^1]$ then 
\[ \frac{\lambda}{\omega_i   (q-1)} (\alpha_i - \phi_i) \leq  L(\phi_i).\]

For $\lambda>1$, $L'(\alpha_i)\geq 0$ and $L''(\alpha_i)\geq 0$, implying that L is an increasing function at the equilibrium point $\phi^*=\alpha_i$.

%----------------D NVCG proof-----------------------------

\item Dynamic-NVCG:

For $\phi^1_j > c_j^V $ the optimality condition is similar to \eqref{op1}. For $\phi^1_i \leq c_i^V $ the optimality condition changes:

\[(\alpha_i -\phi_i) f_g(\omega_i\phi_i + \sum_{j \neq i} \omega_j{\phi^*_j}) = \omega_i\bigg[\dfrac{\ell}{\sum\limits_{i\in Q^d}\omega_{i}}+ (q-1)\bigg] \big[F_g(\phi_i)- F_g(\omega_i\phi_i + \sum_{j \neq i} \omega_j{\phi^*_j})\big] \]

We rearrange

\begin{equation*} \label{op2}
\frac{\lambda \sum\limits_{i\in Q^d}\omega_{i} }{\omega_i [\ell + (q-1)\sum\limits_{i\in Q^d}\omega_{i} ]} (\alpha_i - \phi_i) =  L(\phi_i),
\end{equation*}

where 
\begin{align*}
  L(\phi_i)= [\omega_i\phi_i + \sum_{j \neq i} \omega_j{\phi^*_j}][\phi_i^\lambda(\omega_i\phi_i + \sum_{j \neq i} \omega_j{\phi^*_j})^{-\lambda} -1].
\end{align*}

Similarly to NVCG,  $L'(\alpha_i)\geq 0$ and $L''(\alpha_i)\geq 0$, implying that L is an increasing function at the equilibrium 
 point $\phi^*=\alpha_i$.

\end{enumerate}

\section{Conclusions} \label{concludes}

This research has studied a stylized model for a portfolio auction, which is divided into packages. We design a \enquote{local-global} environment with a finite set of locals, each interested in a single package. Our mechanism allows the asset manager to engage many brokers in the auction process, resulting in lower transaction costs. The information update in the interim for others' valuations mitigates the \enquote{winner's curse} \citep{Voliotis2020} and increases the broker's trust in the sense that the auction's rules have been followed. The design is a simple iterative mechanism with transparency in the auction process and eliminates any computational complexity for the winner's determination problem.

We introduce a dynamic setup for the Nearest-VCG pricing rule for a two-round auction. This new pricing setup aligns brokers' incentives to lower bids, mitigating the first round's free-riding opportunities. Using an endogenous reference rule for the expected VCG pricing outcome, the brokers are motivated to submit bids close to their truthful valuations in the first round, squeezing execution costs downward.

The Dynamic-NVG's primary goal of minimizing strategic misreporting and overbidding aligns with the challenges inherent in general package auctions. Transitioning to a generalized approach necessitates careful consideration of potential customizations to accommodate the diverse complexities, specifications, and constraints characterizing such auctions.

In conclusion, while our study has provided valuable insights within the local-global model, the prospect of generalizing Dynamic-NVCG for a wider spectrum of auctions remains an intriguing direction for future research. This extension has the potential to enhance the understanding of strategic behavior in diverse combinatorial auction scenarios.

%\subsubsection*{Acknowledgement}

%We are most grateful to the 33rd Stony Brook International Conference on Game Theory audience and the 2022 Conference on Mechanism and Institution Design for their valuable comments. Additionally, we are genuinely grateful to the Royal Economic Society for the awarded conference grant to the author as an early-stage researcher.
	
	\bibliographystyle{apacite}
	\bibliography{Archix}
	
	\end{document}